\newcommand{\codettfamily}{\fontfamily{IBMPlexMono-TLF}\small}
\lstdefinestyle{prettycode}{
  basicstyle=\codettfamily\upshape,
  aboveskip={0.9\baselineskip},               
  breaklines=true,
  breakatwhitespace=true,
  postbreak=\mbox{\textcolor{red}{$\hookrightarrow$}\space},
  moredelim=**[is][\color{red}]{@red@}{@end@},
}
\newcommand{\code}[1]{\lstinline{#1}}
\newcounter{assumptioncount}
\newenvironment{assumption}
    {\stepcounter{assumptioncount}%
     \inneassum}
    {\endinneassum}
\newtheorem{example}{Example}[section] 
\newtheorem{remark}{Remark}[section] 
\newtheorem{lemma}{Lemma}[section]
\newtheorem{theorem}{Theorem}[section]
\title{Correctness-Guaranteed Code Generation via\\ Constrained Decoding}
\author{Lingxiao Li\\
Netflix \\
Los Gatos, CA, USA \\
\texttt{lingxiaol@netflix.com} \\
\And
Salar Rahili\\
Netflix \\
Los Gatos, CA, USA \\
\texttt{srahili@netflix.com} \\
\And
Yiwei Zhao\\
Netflix \\
Los Gatos, CA, USA \\
\texttt{yiweiz@netflix.com} \\
}
\begin{document}

\maketitle

\begin{abstract}
Language Models (LMs) are increasingly being used for code generation, but ensuring the correctness of generated programs remains a significant challenge. Although imperfect code may be acceptable during software development with human oversight, domains such as video games and robotics require one-shot correctness for runtime-critical components. 
We present a constrained decoding algorithm for generating semantically correct programs that incorporates a context-sensitive parser, which, at each step, outputs a regular expression that satisfies a critical non-extensible property to guide the generation of the next token sequence that can continue to a correct program. 
To build such a context-sensitive parser, we propose a framework of a dynamic tree of parsers (ToP) during parsing, where each parser corresponds to a modular context-free grammar enriched with contextual information such as variable scopes and type constraints, with tree branches representing ambiguity in the future code segment.
We demonstrate our approach through sLua, a strongly typed variant of Lua, showing that our method can generate semantically correct programs conforming to any prescribed scripting API. We further show that, with careful design, our semantic guarantees extend to runtime correctness, as validated in the application of generating game mechanics for a roguelike video game.
\end{abstract}

\section{Introduction}
The increasing adoption of large Language Models (LM) for code generation has revolutionized software development, particularly in the realm of AI-assisted programming tools such as GitHub Copilot \citep{dakhel2023github}. These tools offer real-time code suggestions and completions, enhancing developer productivity across various domains. However, in runtime critical applications, such as control logic in embodied agents and on-the-fly  generative mechanics in video games, since there is no human programmer in the loop, the generated code must be \emph{semantically correct} with \emph{runtime guarantees} in one shot. 

Recent advances in constrained decoding have allowed LMs to generate structured data that provably conform to regular expressions (regexes) and more generally Context-Free Grammars (CFGs) \citep{koo2024automata, ugare2024improving, dong2024xgrammar}.
However, conforming to context-free grammatical rules is far from semantic and runtime correctness.
Although it is generally believed that constrained decoding can be extended to ensure semantic correctness, two main challenges remain.
\begin{itemize}[leftmargin=*]
\item First, the sequential nature of LMs requires immediate semantic feedback for each token generation in an \emph{incomplete program}, which contrasts with the traditional compiler design that relies on an Abstract Syntax Tree (AST) of a complete program before performing semantic checks.

\item 
Secondly, the atomic building blocks of traditional parsing algorithms are \emph{terminals}, which do not align with the tokens\footnote{Throughout we will use ``tokens" to refer to the LM tokens, not to be confused with terminals in a CFG.} of the LM, resulting in challenges such as token alignment and parser state bookkeeping.
\end{itemize}
To address the first challenge, we propose a general recipe for building a context-sensitive \emph{Tree of Parsers} (ToP) that flexibly incorporates semantic information and handles ambiguity. This is achieved by maintaining a dynamic tree, each node containing a parser that corresponds to a modular CFG of a language construct (\cref{sec:tree_of_parsers}).
From an incomplete program, ToP generates a regex satisfying a crucial non-extensible property (\cref{assum:regex_no_extension}) that determines the successive character sequences capable of extending the present program while adequately long to move the parser state forward.

To address the second challenge, we have introduced a constrained decoding method (\cref{alg:cscg}) that coordinates a context-sensitive ToP with an LM to produce semantically correct programs that can be parsed by the ToP.
At each step, the regex output by the ToP guides the LM in generating the next sequence of tokens, which are then used to advance the parser state, resulting in an incremental generation process.
Our algorithm includes a token healing procedure (\cref{fig:token_healing}) to handle misalignment when tokens cross the boundaries of the regexes. 
We construct a ToP for a custom language, sLua (\cref{sec:slua}), a strongly typed version of Lua with simplified language features.
sLua code can be easily translated to Lua for execution.
Although our recipe for building a ToP is language-agnostic, strong typing reduces the need for type inference and the number of tree branches (\cref{thm:slua_broom}).
Combining sLua's ToP with \cref{alg:cscg}, we have obtained a constrained decoding algorithm that generates semantically correct sLua code, while preserving the ability to generate all possible semantically correct programs.

As an application, with an off-the-shelf LM, we show how our method can be used to generate new mechanics for a roguelike video game, producing diverse and creative code that conforms to a scripting API (\cref{subsec:dci}).
Moreover, with careful API design and reasonable language feature restrictions, we show that our semantic correctness guarantees can extend to runtime correctness without losing expressiveness (\cref{thm:provably_error_free}).
Our approach opens up new possibilities for using LMs in runtime-critical applications, enabling the generation of complex, error-free code within production environments.

\section{Preliminaries}
Constrained decoding \citep{deutsch2019general} is a popular technique for generating structured output with LMs.
At each decoding step, tokens that violate the constraint have their probability set to 0 in the next-token probability vector output by the LM. The masked-out probability vector is then renormalized before sampling the next token.
While constrained decoding can distort the distribution \citep{park2024grammar}, it is widely used in both closed-source \citep{openaistructured} and open-source tools \citep{willard2023efficient,dong2024xgrammar} due to its simplicity.

The crux of constrained decoding relies on having a stateful oracle that can efficiently return a mask for the invalid next tokens at each decoding step.
When the constraint is a regex \citep{chomsky1956three}, the oracle can be implemented as a Finite-State Automaton (FSA) that accepts the language, so that the set of valid next tokens can be deduced from the transitions from the current FSA state by indexing \citep{willard2023efficient} or transducing \citep{koo2024automata}.
These techniques can be extended to CFGs by using a pushdown automaton (PDA) \citep{hopcroft2001introduction} instead of an FSA.
However, formal models such as FSAs and PDAs are not powerful enough for context-sensitive parsing, rendering these constrained decoding techniques unsuitable.

\citet{poesia2022synchromesh} proposed using a completion engine that takes a partial program and outputs the set of valid tokens that extend the program while incorporating semantic constraints. However, their completion engine is not incremental, leading to full verification of the program after each token generation. In our work, we identify a core requirement (\cref{assum:regex_no_extension}) that leads to incremental constrained generation with token healing (\cref{alg:cscg}). Moreover, they target relatively simple programming languages such as SQL and SMCalFlow and do not have a general-purpose solution to handle ambiguities arising from multiple context-dependent rules. The approach of \citet{agrawal2023monitor} also addresses semantic correctness via constrained decoding, but in a narrower setting that focuses on the language feature of dereferencing an object. One key limitation of their method is that it can incorrectly accept long tokens that start with a valid prefix but end with an invalid suffix. Our approach differs by incorporating a token healing algorithm specifically designed to handle such boundary-crossing token errors, ensuring greater program integrity.


Most relevant to our method, the concurrent work of \citet{mundler2025type} developed a type-constrained decoding approach using prefix automata to reduce compilation errors in TypeScript. Their prefix automata are similar in spirit to our Tree of Parsers (ToP), as both ensure that every reachable state can be extended to a valid final program. Our construction, however, leverages a general-purpose CFG parser generator \citep{larkparser} to process modular grammar templates enriched with contextual information. This allows us to offload the direct handling of grammar parsing to the generator.
In addition, our work focuses on generating code with formal runtime correctness guarantees, ensuring that scripts terminate and execute without errors in a live environment, whereas the goal of \citet{mundler2025type} is to reduce compilation errors and improve functional correctness.

\paragraph{Context-Free Grammar (CFG)}
A CFG \citep[Sec. 4.2]{aho2022compilers} is described by a set of rules (nonterminals) and terminals. A terminal is a regex or a literal string. 
A rule is a list of terminals and rules.
The language of a CFG is the set of strings that can be generated by starting from the start rule and repeatedly unrolling rules until only terminals remain.
A program is syntactically correct if it is contained in the language of the CFG.
An example CFG for type specification for sLua is shown in \cref{subsubsec:type_spec_parser}, whose language includes strings like \code{number}, \code{(string,boolean)->string}.

\section{Context-Sensitive Constrained Decoding}\label{sec:cscg}
\begin{algorithm}[p]
   \caption{Context-sensitive constrained decoding with token healing}
   \label{alg:cscg}
   \KwIn{Context-sensitive parser \code{P}, language model \code{L}, prompt \code{prompt}}
   \KwOut{Generated program}
   \code{prog, prefix, prompt_tokens} $\gets$ \code{"", "", L.tokenize(prompt)}\;
   \While{not \code{P.has_finished()}}{
       \tcp{Modify regex from \code{P.next_regex()} to support token healing (\cref{fig:token_healing}).}
      \code{r} $\gets$ \code{re.escape(prefix) + "(" + P.next_regex() + ")" + ".*"}\;
      \code{A} $\gets$ \code{build_DFA(r)} \tcp*{build a deterministic finite automaton (DFA) from \code{r}}
      \code{s, output_tokens} $\gets$ \code{A.initial_state, []}\;
      \While(\tcp*[f]{adaptive rejection sampling to find a valid token}){\code{s} $\notin$ \code{A.final_states}}{
        \code{logits} $\gets$ \code{L.next_token_logits(prompt_tokens + output_tokens)}\;
        \code{order} $\gets$ \code{argsort(log(-log(random(len(logits)))) - logits)}\;
        \For(\tcp*[f]{Gumbel-top-k trick \citep{kool2019stochastic}}){\code{t} in \code{order}}{
            \If{\code{A.accepts(s, t)}}{
                \code{s} $\gets$ \code{A.transition(s, t)}\;
                \code{output_tokens.append(t)}\;
                \Break\;
            }
        }
      }
      \code{output, new_prefix} $\gets$ \code{"".join(output_tokens), ""}\;
      \If(\tcp*[f]{last token is partially matched}){\code{A.accepts(output[:-1])}}{
          \code{last_token} $\gets$ \code{output_tokens.pop()} \;
          \code{output} $\gets$ \code{"".join(output_tokens)}\;
          \code{new_prefix} $\gets$ prefix of \code{last_token} such that \code{A.accepts(output + prefix)}\;
      }
      \code{new_prog} $\gets$ \code{(output + new_prefix)[len(prefix):])}\;
      \code{P.feed_text(new_prog)} \tcp*{advance parser}
      \code{prompt_tokens} $\gets$ \code{prompt_tokens + output_tokens}\;
      \code{prefix, prog} $\gets$ \code{new_prefix, prog + new_prog}
   }
   \Return{\code{prog + prefix}}\;
\end{algorithm}
\newcommand{\overviewfig}[1]{%
    \includegraphics[page=#1, scale=0.9, trim=0.25in 0.20in 0.25in 0.20in, clip]{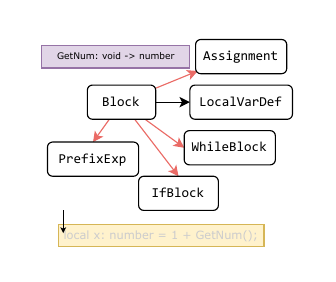}%
}
\begin{figure}[p]
   \centering
   \overviewfig{1}
   \overviewfig{2}
   \overviewfig{3}
   \overviewfig{4}
   \overviewfig{5}
   \overviewfig{6}
   \caption{Illustration of the dynamic tree of parsers on an sLua statement.
      The purple box represents the environment which tracks variable scopes and types.
      The yellow box shows the parsed program (in black) and the future segments to be parsed (in gray).
      Rounded rectangles represent parser nodes, and the arrows point from parent nodes to their children.
      Red arrows indicate parser nodes that will be pruned upon consuming the next segment.
      Each path from the root to a leaf represents a possible stack of parsing states that could lead to a complete program.
      See \cref{subsec:failure_cases} for examples of regexes output by the root parser.
   }
   \label{fig:overview}
\end{figure}

\paragraph{Overview}
We start by presenting our overall algorithm in \cref{alg:cscg}, which takes a context-sensitive parser \code{P} and a language model \code{L} and generates programs that can be parsed by \code{P}.
We defer the construction of \code{P} to \cref{sec:tree_of_parsers} and assume for now that \code{P} supports the following:
\begin{itemize}[leftmargin=*]
\item \code{has_finished()}: Whether a complete program has been parsed.
\item \code{next_regex()}: Returns a regex satisfying \cref{assum:regex_no_extension} to match the next segment.
\item \code{feed_text(s)}: Takes in a segment \code{s} advances the parser state.
\end{itemize}
At each step of the generation algorithm (\cref{alg:cscg}), we use the regex output of \code{P} to guide the generation of the next segment of the program that will then be fed to advance the state of \code{P}.
We use a token healing procedure (\cref{fig:token_healing}) to handle the token misalignment issue when a token crosses the boundary between two consecutive regexes.

\paragraph{Regex Requirement}
When using regexes to successively guide constrained decoding, a main challenge is to construct the regexes in such a way that the LM has the option to stop generation when the regex is matched, so that the parser state will advance.
Normally, an LM outputs a special stop token to end the generation.
However, here we are using an LM to generate one segment of the program at a time.
Since an LM is typically trained on complete programs, the probability of sampling a stop token mid-program is close to zero.
\begin{example}\label{ex:regex_pitfall}
   Suppose that the parser indicates that the next program segment should be an integer and the regex is \code{/[0-9]+/}.
   Since this regex can match digits indefinitely, in order to stop generation, the constrained decoding algorithm must decide when to stop.
   However, if we modify the regex to \code{/[0-9]+<END>/} where \code{<END>} is a set of symbols that signal the end of an integer depending on the current context, then the \textbf{responsibility to stop is delegated to the LM} which is after all the code generator.
   For example, if the integer is the last argument of a function call, then \code{<END>} should include \code{)}; if the integer is part of an arithmetic expression, then \code{<END>} should include any arithmetic operators, and \code{)} if there is a matching \code{(} in the parsed program.
\end{example}
To formalize intuition in \cref{ex:regex_pitfall}, we impose the following requirement: 
\begin{assumption}[Non-extensible match] \label{assum:regex_no_extension}
   In \cref{alg:cscg}, any regex returned by \code{P.next_regex()} must satisfy: If a string \code{s} matches the regex, then any extension of \code{s} must not match the regex.
\end{assumption}
In other words, in Deterministic Finite Automatons (DFAs) constructed in \cref{alg:cscg}, one cannot reach a final state from another final state.
Therefore, once the DFA is in a final state, we can confidently stop the generation and move on.
We defer to \cref{sec:tree_of_parsers} on how to make the parser \code{P} satisfy \cref{assum:regex_no_extension} using a look-ahead strategy.

\paragraph{Speeding up with Adaptive Rejection Sampling}
In order to use regex to guide generation, we first build a character-accepting DFA and then use adaptive rejection sampling \citep{gilks1992adaptive} to sample a valid next token at each decoding step.
We use adaptive rejection based on the observation that modern LMs generate correct code without constrained decoding in most cases; indeed, adaptive rejection reduces to unconstrained sampling if the first sampled token is valid.
The alternative is to compile the character-accepting DFA into a token-accepting DFA \citep{willard2023efficient, koo2024automata}.
However, as the regex is constantly changing, the cost of compilation for each semantic segment of the parser is very high.
In \cref{tab:constrained_gen_speed}, we benchmark the generation speed of the alternatives, showing that the adaptive rejection sampling is more than three times faster than \citet{willard2023efficient} and \citet{koo2024automata} when used in \cref{alg:cscg}.


\paragraph{Handling Token Misalignment at Regex Boundaries}
\begin{figure}
   \centering
   \includegraphics[scale=0.4, trim=0.25in 2in 0.45in 0in, clip]{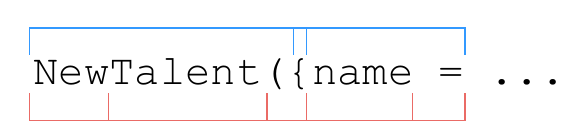}
   \hspace*{0.4in}
   \includegraphics[scale=0.4, trim=0.25in 0.1in 0.45in 2in, clip]{figures/token_healing.pdf}
\caption{
   Illustration of token healing for a code segment in the talent script of DCI (\cref{subsec:dci}).
   Left: blue (resp. red) brackets indicate boundaries of the regexes from the context-sensitive parser (resp. of LM tokens).
   For instance, while \code{(\{} is a single token, it crosses the boundary between two regexes, one for \code{NewTalent(}  and one for just the curly brace \code{\{}.
   Right: our token healing method in \cref{alg:cscg} that prepends the matching prefix of the previous token to the regex while allowing the last token to be partially matched.
}
\label{fig:token_healing}
\end{figure}
Since LMs have custom subword tokenizers, their tokens can cross the boundary of successive regexes output by the context-sensitive parser. 
To address this misalignment issue, we added a token healing strategy in \cref{alg:cscg} illustrated in \cref{fig:token_healing}.
We use a \code{prefix} variable to keep track of the prefix of the partially matched token from the previous regex, while modifying the regex by prepending the escaped string of \code{prefix} and appending \code{.*} to match an arbitrary suffix.
This guarantees that the first generated token must start with \code{prefix} and the last token is allowed to partially match.
Thanks to \cref{assum:regex_no_extension}, after the modification of the regex, if the corresponding DFA is in a final state, then we know the last token is fully or partially matched and we can stop the generation for the regex.

\section{Context-Sensitive Parsing via a Dynamic Tree of Parsers}\label{sec:tree_of_parsers}
Our context-sensitive parsing design is based on two insights:
\begin{itemize}[leftmargin=*]
    \item 
It is difficult to inject context into the full CFG of a programming language. However, if we are concerned with only a specific language construct (e.g. a type specification), locally, it is possible to create another CFG that is enriched with semantic context (e.g. \cref{subsubsec:type_spec_parser}) and only accepts semantically correct code.
    \item 
Many CFG rules naturally align with the context boundaries. For example, scoping changes when entering blocks, and the context gets updated after processing variable declarations.
\end{itemize}
Based on these insights, we developed a two-stage solution that builds a dynamic tree of parsers (ToP). Each node in this tree is associated with a CFG and an \emph{interactive} parser (\cref{subsec:interactive_cfg_parser}) generated for that CFG.

The value of our ToP construction goes beyond constrained decoding. It can incrementally verify the semantic correctness of an incomplete program. This can be utilized as a reward model for reinforcement learning to finetune a language model to better produce semantically correct code.

\textbf{Stage 1: Refactor the complete CFG into modular CFG templates}

First, we refactor the complete CFG of the target language into modular CFG templates.
These templates leave out \emph{slots} that are populated with semantic context during parsing.
We add special terminals called \emph{placeholders} to the CFGs to indicate a recursion point where a child parser needs to be spawned if the corresponding placeholder is in the accepting set of terminals of the parent parser.
In \cref{subsec:csp_slua}, we detail how the complete CFG refactoring is done for sLua.
\begin{example}\label{ex:cfg_str_exp}
Consider a simple modular CFG for string expressions in sLua:
\begin{lstlisting}
str_exp: str_sum
str_sum: str_atom (".." str_atom)*
str_atom: STRING | "(" str_exp ")" | "<STR_PREFIX_EXP>"
\end{lstlisting}
This grammar matches string atoms that can be concatenated using the \code{..} operator.
String atoms can be literal quoted strings (represented by the terminal \code{STRING} whose regex is omitted), a parenthesized string expression, or a prefix expression of the string type (such as a string variable, a function that returns a string, or a chained function call that eventually returns a string).
We use the placeholder \code{<STR_PREFIX_EXP>} to indicate that upon expecting this terminal to be one of the possible outcomes (in addition to string literals and parenthesized \code{str_exp}), the parser of this CFG should create a child parser of type string prefix expression (\cref{subsubsec:prefix_exp_parser}).
\end{example}

\textbf{Stage 2: Define the Tree of Parser Nodes}

Next, to form the ToP, for each modular CFG template from the first stage, we define a tree node type. Each instance of the node contains a parser generated from the corresponding CFG template (\cref{subsec:interactive_cfg_parser}) whose slots are populated with semantic context during parsing.
Each node supports three functions, \code{has_finished}, \code{next_regex}, and \code{feed_text}, similar to those described in \cref{sec:cscg} but specialized for the modular CFG associated with the node.
Each node manages its own parser state, and the overall context-sensitive parser is simply the root node of the tree.

The parsing process involves the following.
\cref{fig:overview} illustrates how ToP evolves when parsing a local variable definition for sLua.
\begin{itemize}[leftmargin=*]
\item 
\textbf{Get next regex}: 
Before forming regexes, a parser node checks whether there are placeholder terminals in its accepting set; if so, a child node will be created for each placeholder.
If the accepting set contains terminals that are not placeholders, the parent node spawns a copy of itself as a child node to handle the continuation.
To produce a regex for the next valid program segment, if a node has no child, the regex representing the union of all accepting terminals of its parser is returned; otherwise, it forms the union of the regexes from all its children's \code{next_regex}. 
This combined regex describes all valid continuations of the current program.
\item
\textbf{Advance parser state}: When a new segment (e.g. sequence of tokens generated by an LM) is fed to a parent node, it delegates the segment to all its children. 
If any child fails to parse the text indicating a conflict, then that child is removed from the parent. 
If a child's \code{has_finished} returns true after feeding a segment, all children of the parent are removed, and the parent parser is advanced with the placeholder of the finished child (or, in the case of the child being a copy of the parent, the parent's parser state is replaced with the child's).
\end{itemize}
In \cref{alg:base_parser}, we provide a pseudocode for a base class for the nodes.
At any point, any path from the root node to a leaf represents a stack of nested parsers that {\itshape at least one valid complete program can continue from these parser states}.
As more code is consumed, the tree branches that result in conflicts are pruned, ensuring that only valid paths remain.

\paragraph{Environment management}
As in static analysis, we assume that the static environment is being tracked based on the parsed program thus far; the implementation of the tracking is language dependent, for example, as a stack that tracks scopes and symbols in each scope.
Before creating a parser node, if its CFG template has slots, we query the current environment and populate the slots accordingly.
The environment is then updated when certain parsers finish parsing (e.g. when defining a local variable) or when certain terminals are consumed (e.g. when entering/exiting a scope by encountering \code{do}/\code{then}/\code{end}).
For instance, when spawning the string prefix expression child parser in \cref{ex:cfg_str_exp}, we need to query the environment for all symbols (variables and functions) that can eventually result in a string type.
While ToP has multiple branches to handle future ambiguities, we only need a single environment since it is based on the parsed program.

\paragraph{Controlling the size of ToP}
The time complexity of parsing with ToP is directly proportional to the size of the tree. While in general this size can be large, when the programming language front loads specifications that reduce ambiguity of parsing (e.g. type specification before any expression), the tree size can be significantly reduced.
We characterize the case where the ToP has a ``broom" shape, ensuring the tree remains compact.
In \cref{lem:slua_broom}, we show that our sLua ToP satisfies this assumption.
\begin{assumption}[Broom-shaped ToP]\label{assum:broom}
    When a parser node $x$ spawns more than one child, all ancestors of $x$ can have only one child.
\end{assumption} 

\paragraph{Satisfying the non-extensible match property}
To ensure that the regex of the ToP root node satisfies \cref{assum:regex_no_extension}, we recommend a look-ahead strategy that requires the following assumption in modular CFGs:

\begin{assumption}\label{assum:terminal_follow}
   A terminal that is either a placeholder or violates \cref{assum:regex_no_extension} in a modular CFG must be followed by non-placeholder terminals satisfying \cref{assum:regex_no_extension}.
\end{assumption}
This assumption is typically met by popular programming languages, which often include many delimiters, such as \code{;} following a statement in C-style languages, \code{end} concluding a block in Lua, or \code{)} succeeding a function call in various languages. We prohibit follow-up with a placeholder to prevent recursive look-ahead.

\paragraph{Look-ahead strategy to extend regexes for non-extensibility} When a terminal or the regex returned by a child node violates \cref{assum:regex_no_extension}, we look ahead at the set of next terminals after accepting the terminal or the placeholder and append a regex matching the set of next terminals to the regex.
To ensure that modular CFGs do not end with a terminal that breaks \cref{assum:regex_no_extension}, we add a slot for ending symbols at the end of the starting rule when needed.
At instantiation, we query its parent parser for terminals after accepting the child's placeholder and insert a regex representing the union of these terminals to the ending symbols slot.
Example regexes obtained after applying this strategy for sLua can be found in \cref{subsec:failure_cases}.

\section{sLua Language and Parsing}\label{sec:slua}
We have designed a custom programming language called sLua and developed a ToP for it based on the framework in \cref{sec:tree_of_parsers}. 
sLua is a strongly typed variant of Lua with restricted language features.
It supports common types such as numbers, strings, booleans, functions, and user-defined fixed-size tables.
Additionally, it supports variable declarations, standard control flow constructs, expressions, and function calls.
In particular, statements must end with a semicolon \code{;} to satisfy \cref{assum:terminal_follow}, and a type annotation in the style of Luau \citep{luau} is required after the declaration of the local variable to ensure satisfaction of \cref{assum:broom}.
Dynamic data structures, \code{nil} pointers, and recursive function calls are not supported in favor of runtime correctness guarantees (\cref{thm:provably_error_free}).
See \cref{subsec:slua_design} for more details and \cref{ex:slua_catalyst} for an example.

In \cref{subsec:csp_slua}, we provide an implementation of the ToP for sLua which offers the following linear-time parsing guarantee, as proven in \cref{subsec:linear_time_parsing}:
\begin{theorem}[Linear-time parsing of sLua]\label{thm:slua_broom}
    Consider the implementation of ToP for sLua described in \cref{subsec:csp_slua}.
    Suppose that during parsing the number of variables in the scope, the number of nested function calls, and the number of scopes are bounded by a constant.
    Then, both the functions \code{next_regex} and \code{has_finished} of the root node in the ToP have a constant time complexity, and the \code{feed_text} function has linear time complexity in the size of the input.
\end{theorem}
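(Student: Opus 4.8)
The plan is to reduce all three claims to one structural fact: under the stated hypotheses the sLua ToP contains only a constant number of nodes at every point during parsing. Granting this, the constant-time bounds for \code{next_regex} and \code{has_finished} and the linear-time bound for \code{feed_text} both follow by charging a constant (resp. linear) amount of work to each of the $O(1)$ nodes.

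First I would bound the size of the tree, which is the crux of the argument. By \cref{lem:slua_broom} the sLua ToP satisfies the broom property (\cref{assum:broom}): any node $x$ that spawns more than one child has only single-child ancestors. A short consequence I would spell out is that $x$'s descendants are also single-child — if some descendant $z$ of $x$ branched, then $x$ would be a multi-child ancestor of $z$, contradicting \cref{assum:broom}. Hence the tree is genuinely broom-shaped: a single ``handle'' path from the root down to one branch node, followed by a constant-width bundle of simple ``bristle'' paths with no further branching. I would then bound the two dimensions separately. Each node on a root-to-leaf path corresponds to a still-open language construct on the parse stack — an in-progress function call, an open scope, or a pending parenthesized sub-expression — so the hypotheses bounding the numbers of nested function calls and scopes bound the length of the handle and of every bristle by a constant. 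The number of bristles is bounded by the branching factor at the branch node, and for sLua (thanks to strong typing and front-loaded type annotations) the only source of multi-child branching is resolving a prefix-expression/variable reference to one of the in-scope symbols of the expected type, as in \cref{ex:cfg_str_exp}; this is bounded by the number of variables in scope. A constant number of bristles, each of constant length, atop a constant handle gives $O(1)$ nodes.

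Next, for \code{next_regex} and \code{has_finished} I would show each node does $O(1)$ work, so the totals are constant. Every node carries a parser for a fixed modular CFG template (independent of the input), so its accepting set of terminals is read off precomputed parse tables in $O(1)$, the number of placeholder terminals and hence children spawned is grammar-bounded, and assembling the union regex is a union of a grammar-bounded number of pieces. The single environment-dependent piece is the regex enumerating in-scope symbols of a given type; its construction and size are linear in the number of variables in scope, hence $O(1)$. The look-ahead extension that enforces \cref{assum:regex_no_extension} queries the parent for its constant-size follow set, also $O(1)$. Summing over the $O(1)$ nodes (per \cref{alg:base_parser}) gives constant time for \code{next_regex}; \code{has_finished} merely tests whether the root parser is in an accepting state with no pending children, which is $O(1)$.

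Finally, for \code{feed_text} I would charge the cost of consuming a segment of length $n$ to that segment. Feeding delegates the segment down the tree; the actual lexing and parser advancement happen at the active leaf parsers (\cref{subsec:interactive_cfg_parser}), of which there are $O(1)$, and lexing plus the deterministic parser's advancement on a length-$n$ string runs in $O(n)$; interior nodes perform only $O(1)$ bookkeeping when a child is pruned on conflict or promoted on completion. Thus one \code{feed_text} call costs $O(n)$, and summing over the disjoint segments that partition the whole program yields time linear in the program size. The step I expect to be the real obstacle is the tree-size bound above: one must verify rigorously that in sLua every grammar choice other than typed symbol resolution is settled deterministically by the single-token look-ahead, so that the broom-shape of \cref{lem:slua_broom} together with the variable-count bound genuinely caps the tree width; the per-node and per-character accounting that follows is routine.
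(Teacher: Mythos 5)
Your proposal follows essentially the same route as the paper: invoke \cref{lem:slua_broom} to get the broom shape, bound the depth by the nesting hypotheses, conclude the tree has $O(1)$ nodes, and then charge constant work per node for \code{next\_regex}/\code{has\_finished} and linear work for \code{feed\_text} via the LALR(1) parsers. The one place you diverge --- and the place you yourself flag as the likely obstacle --- is the width bound, and there your account of the mechanism is inaccurate. You attribute multi-child branching to ``resolving a prefix-expression/variable reference to one of the in-scope symbols of the expected type'' and bound the number of bristles by the number of variables in scope. In the actual construction (\cref{subsec:implementation_of_tree_of_parsers}, \cref{subsubsec:prefix_exp_parser}), symbol resolution does \emph{not} spawn one child per candidate symbol: the \code{<FIELD>} placeholder spawns a single child whose regex is the union of all admissible fields, so the in-scope variable count affects the size of that regex, not the fan-out. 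Branching instead comes from a node spawning one child per placeholder terminal in its accepting set, plus possibly a self-copy; the paper therefore bounds the fan-out by the number of placeholders in the node's grammar plus one (at most $9$, attained by the block parser), a grammar-determined constant independent of the environment. Since both your bound and the paper's are constants under the stated hypotheses, your conclusion survives, but the justification should be repaired. A second minor imprecision: you list ``a pending parenthesized sub-expression'' as something contributing to root-to-leaf path length, which the hypotheses do not bound; in fact nested parentheses are absorbed inside a single expression parser's CFG and do not add tree depth, so this should be dropped rather than left as an unbounded contributor.
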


We selected Lua as our base language because our primary focus is to generate programs conforming to certain API specifications which is achieved by defining global tables in the environment. The embedded nature of Lua makes it particularly well-suited for this purpose. 
Moreover, LLMs have been trained on substantial amounts of Lua code, so they generalize to sLua well with proper instructions and a few in-context examples.

To run the sLua code, we augment each node of the ToP to output an AST.
This involves replacing each placeholder with the corresponding AST of the child node.
Based on the AST, we then translate the sLua program to a Lua program (e.g., remove type annotations) and use the Lua interpreter to execute the translated code.
The same procedure can be adapted to translate sLua to other host languages.

\section{On-the-Fly Generative Game Mechanics}\label{sec:dci}
\subsection{Dungeon Crawl Infinite}\label{subsec:dci}
Recent efforts to integrate LLMs into games at runtime have primarily focused on narrative generation (e.g. \citet{inworld}) or as black boxes \citep{aidungeon} to predict outcomes.
We instead consider creating game mechanics on the fly by generating scripts (e.g. Lua) that the game engine can directly run.
This approach offers several benefits: direct engine compatibility, consistent and reusable scripts that reduce computational costs, and full transparency provided a way to accurately describe the scripts to the player. The primary challenge, ensuring that the generated code is correct and does not crash the game, is precisely what our work addresses.

We built a turn-based roguelike game called Dungeon Crawl Infinite (DCI) to demonstrate the capabilities of our correctness-guaranteed code generation framework.
DCI is directly inspired by \cite{tome}, a highly rated roguelike RPG featuring turn-based combat and advanced character building.
In DCI, the player controls a character exploring a dungeon, fighting monsters, and building up their character procedurally with generated \emph{talents} and \emph{effects} given the player's input.
See \cref{subsec:dci_flow} for an overview of the game flow.
Both talents and effects are implemented as sLua scripts:
\begin{itemize}[leftmargin=*]
   \item
      \textbf{Talent}: Contains a \code{Do} function that executes whenever a player or enemy uses the talent.
   \item
   \textbf{Effect}: Tracks states (e.g., poison stacks) and implements hook functions (e.g., \code{OnTurnEnd}, \code{OnDamageReceived}) that the game engine calls at specific moments.
\end{itemize}
Both include a \code{GetDescription} function that returns a dynamic text description reflecting the current game state, informing players about the talent or effect's functionality. Scripts can reference each other through API functions (e.g., \code{effect.Apply(actor, param)}), enabling complex and synergistic combinations despite the system's simplicity.
We design a comprehensive sLua API (\cref{subsec:slua_api}) for DCI containing 55 functions (also extended by each effect definition) that cover rich game mechanics.

To generate talent and effect sLua scripts using \cref{alg:cscg}, we have developed specialized root parser nodes (\cref{subsec:talent_effect_parsers}) on top of the sLua parser nodes to ensure that the generated programs fit the templates for talents and effects.
The effect parser node demonstrates how to easily parse user-defined classes by extending the provided set of sLua parser nodes.

\paragraph{Provably runtime error-free design}
While determining whether a program in a Turing-complete language executes without errors is NP-hard (as demonstrated by the halting problem), we achieve provable error-free execution by carefully restricting sLua's language features and designing DCI's API. Our approach combines several key strategies detailed in \cref{subsec:provably_error_free}: a safe callback pattern that eliminates null pointers, guaranteed termination through limited loop iterations and lexical scope function references, and capped recursion depth for effect hooks. This design trades some expressiveness for safety, a reasonable compromise since most game scripts are relatively simple and do not require complex language features. We formalize this below (proved in \cref{subsec:provably_error_free}).

\begin{theorem}\label{thm:provably_error_free}
   Given the API in \cref{subsec:slua_api} and using the techniques described in \cref{subsec:provably_error_free}, scripts successfully generated for talents and effects in DCI using \cref{alg:cscg} are guaranteed to terminate and execute without runtime errors in the game engine.
\end{theorem}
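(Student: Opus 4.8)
The plan is to prove \cref{thm:provably_error_free} by decomposing the claim into its two constituent guarantees---termination and absence of runtime errors---and establishing each by structural induction over the sLua constructs that \cref{alg:cscg} is permitted to generate, leveraging the fact that the ToP for sLua only accepts semantically well-typed programs. The overall strategy is to argue that the \emph{static} correctness enforced by the context-sensitive parser, combined with the deliberate language-feature restrictions of sLua and the API design of DCI, closes every avenue through which a Lua program could otherwise fail at runtime. I would first enumerate these avenues exhaustively (unbound loops, stack overflow from recursion, nil-pointer dereference, type errors, arithmetic on incompatible types, calls to undefined symbols), since the proof is essentially a case analysis showing each avenue is statically precluded.

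The first main step is to establish termination. I would appeal to the stated design constraints from \cref{subsec:provably_error_free}: sLua forbids recursive function calls and restricts loops to a bounded number of iterations, while effect hooks have a capped recursion depth. Formally, I would define a well-founded measure on the execution state---for instance, a lexicographic combination of the (statically bounded) loop counters and the capped hook-invocation depth---and show that each step of execution strictly decreases this measure, so no infinite execution path exists. The restriction that function references come only from lexical scope (preventing the construction of new recursive cycles at runtime) is what makes this measure well-defined; I would want to verify that the ToP's scope tracking (the environment of \cref{sec:tree_of_parsers}) indeed guarantees no forward or circular references can be generated.

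The second main step is the absence of runtime errors. Here I would invoke the \emph{strong typing} of sLua together with \cref{thm:slua_broom}'s guarantee that the parser only accepts type-correct programs: every expression's type is known statically, every variable is declared with an annotation before use, and every function call matches an API signature from the 55-function catalog in \cref{subsec:slua_api}. I would argue that a standard type-soundness (progress-and-preservation) argument then applies: a well-typed sLua program never gets ``stuck,'' meaning every operation it attempts is defined on the types of its operands. The elimination of \code{nil} pointers via the safe callback pattern is the crucial ingredient that rules out the one class of errors pure typing would not catch, so I would treat that pattern separately and show it guarantees every object reference handed to a script is non-null and of the expected type.

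The hard part will be the interface between the abstract guarantees and the concrete API: type soundness and termination hold \emph{relative to} an assumption that every API function behaves according to its declared signature and itself terminates without error. I expect the main obstacle to be discharging this assumption cleanly---i.e., arguing that the host-engine implementations of the API functions (which are outside the generated sLua code) preserve the invariants the proof relies on, such as never returning \code{nil} where a typed value is promised and never triggering unbounded recursion through callbacks. I would handle this by stating it as an explicit precondition on the API design and then verifying, function by function (or by a uniform structural property of the safe callback pattern), that the DCI API of \cref{subsec:slua_api} satisfies it; the rest of the proof is then a relatively mechanical induction once this foundation is secured.
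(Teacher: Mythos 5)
Your proposal is correct and follows the same overall decomposition as the paper: split the claim into termination and error-freedom, treat the API's behavioral guarantees as a design precondition (the paper states exactly this in \cref{subsec:slua_api}), and close off nil dereference via the safe callback pattern and memory errors via the parser's scope/field checking. The differences are in the formal machinery. For termination, the paper argues by induction on program length with a case analysis over the block-parser CFG constructs (do/if/for/while/assignment/return/prefix-expression), handling function calls by observing that lexical scoping plus the ban on function-typed arguments to local functions makes the call graph a DAG; you instead propose a well-founded measure on the execution state. That is a legitimate alternative, but note that a lexicographic measure of only loop counters and hook depth does not decrease across an ordinary function call---you would need to add a component tracking position in the call DAG (e.g., the number of lexically preceding definitions) for the measure to be genuinely well-founded, which is precisely the role the lexical-scoping observation plays in the paper's induction. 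For error-freedom, you invoke a full progress-and-preservation type-soundness argument, whereas the paper gives a more direct and informal case analysis (no nil, no dynamic structures, only in-scope symbols and declared fields are reachable, and API calls are assumed safe when their arguments type-check); your route is heavier but buys a more standard and reusable guarantee, at the cost of having to actually set up an operational semantics and typing judgment for sLua that the paper never formalizes.
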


Despite this strong runtime-correctness guarantee, \cref{alg:cscg} itself is not guaranteed to terminate. Nontermination is a well-known issue in constrained decoding; it is part of the more general issue of distribution distortion, where the LM could get the constraints slightly wrong and then corner itself to generate repetition or junk text. We discuss this challenge further in \cref{subsec:distort_distribution}.

\subsection{Experiments and Evaluation}
To evaluate our pipeline, we propose the following challenging task: generate a \emph{talent category} in DCI.
A talent category in DCI consists of 4 talents and up to 4 effects whose theme adheres to a given prompt. Talents and effects can refer to each other in the same category, enabling a complex web of interactions.
We use a simple LLM agent to guide the generation of a category (\cref{subsec:talent_category_agent}).

We prepared a total of 8 talent category prompts (detailed in \cref{subsec:talent_category_prompt}) to test each method.
The effects are generated first, followed by the talents.
Generated scripts are registered in the environment and included in future prompts, so that future scripts can refer to previously generated scripts.
This extends the correctness guarantees of \cref{thm:provably_error_free} to talent categories.
For in-context examples used in the prompts, we handcrafted 8 effects and 6 talents that cover various API usages.

\begin{figure}[htb]
    \centering
    \includegraphics[width=0.9\linewidth]{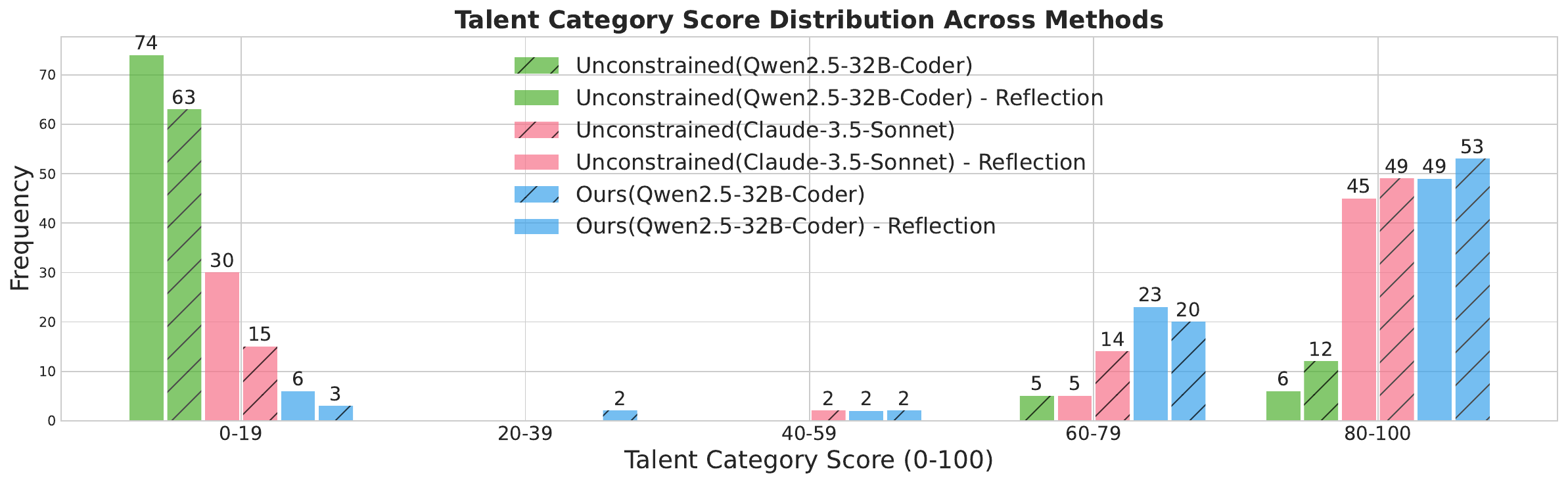}
    \caption{Histogram on talent category scores for various methods. 
   For each of the 8 talent category prompt, we run each method 10 times, for a total of 80 data points for each method in the histogram.
   An unsuccessful generation is treated as 0 in score.
   Our method with constrained decoding achieves the highest success rate and quality scores, especially when combined with reflection.
   For highest quality scripts in score range 90-100, Claude-3.5-Sonnet (with reflection) has 38 data points compared to 33 data points of ours (with reflection).
   Our method failed only when the generation did not terminate (\cref{subsec:distort_distribution}).
   }
   \label{fig:method_comparison}
\end{figure}

\begin{table}[h]
    \centering
    \begin{tabular}{|r|r|r|r|r|}
        \hline
        & Unconstrained & \citet{willard2023efficient} & \citet{koo2024automata} & Ours \\ \hline
        tokens/sec & 18.89   & 2.04   & 1.47   & 7.64   \\ \hline
        compilation/total & 0\%   & 73.35\%   & 79.41\%   & 4.98\%   \\ \hline
    \end{tabular}
    \caption{Time comparison of regex-compilation subroutines for constrained decoding in \cref{alg:cscg} for generating a single DCI talent category with prompt ``Magic that mixes the power of ice and poison."
    Measured time excludes the parser update which consistently has a throughput of over 1500 characters per second.
    All methods first compile a logit processor and then use \code{VLLM}'s API \citep{kwon2023efficient} to carry out constrained decoding.
    For \citet{willard2023efficient}, the compilation step involves indexing that converts a character-accepting DFA to a token-accepting DFA.
    For \citet{koo2024automata}, the compilation step involves concatenating the DFA of the regex with a character-to-token finite state transducer (precomputed), followed by a determination step using the \code{OpenFST} library \citep{allauzen2007openfst}.
    Note that we did not implement the wildcard handling in \citet{koo2024automata}, which could explain the speed drop compared to \citet{willard2023efficient} as reported in their paper, as their wildcard handling strategy reduces the flexibility of the regexes.
    The significant speedup of our adaptive rejection sampling is explained by the fact that as the regex is constantly changing, front-loaded compilation to a token-accepting DFA is wasteful compared to the lazy construction of the same DFA of adaptive rejection sampling.
    This experiment is done with 8 A100 GPUs running Qwen2.5-32B-Coder just like the rest of the experiments.
    }
    \label{tab:constrained_gen_speed}
\end{table}

For baselines, we consider unconstrained generation using Claude-3.5-Sonnet \citep{anthropic2024claude35sonnet}, one of the best closed-source coding LLMs, and Qwen2.5-32B-Coder \citep{hui2024qwen2}, one of the leading open source coding LLMs.
For our method in \cref{alg:cscg}, we use the same Qwen2.5-32B-Coder as the LM.
In addition, each method includes a \emph{reflection} version. If the initial code generation fails---indicated by parsing error of our parser---then the same LM is prompted to correct the generated code using the error location and the anticipated subsequent regex as guidance in the prompt.
This reflection strategy represents the typical rejection sampling technique to use LMs to correct the generated code.
For evaluating quality, we use Claude-3.5-Sonnet to output a score between 1-100 to judge the code quality and how well the generated program matches the description returned by the \code{GetDescription} function for each talent. 

The results are reported in \cref{fig:method_comparison}.
For our method, we terminate the generation and count as failure if the total number of output tokens is over 1500---this typically leads to indefinite repetition of output.
In \cref{subsec:failure_cases}, we provided examples of failure cases for each method.
While unconstrained methods typically failed due to hallucination of variables in scope and incorrect syntax, our method failed only when it did not terminate after the output token limit, when constrained decoding cornered the LM to output an indefinite repetition due to distortion of distribution (\cref{subsec:distort_distribution}).

In \cref{tab:constrained_gen_speed}, we show the inference speed of our method with adaptive rejection sampling compared to regex-compilation alternatives.

\section{Conclusion}\label{sec:conclusion}
We presented \cref{alg:cscg} to generate correctness-guaranteed programs via constrained decoding with a context-sensitive parser built as a tree of parsers (\cref{sec:tree_of_parsers}).
This is demonstrated through sLua, a strongly-typed version of Lua, a roguelike game DCI.
As shown by our results and discussed in \cref{subsec:distort_distribution}, auto-regressive constrained decoding can distort distribution, resulting in nontermination of generation or sub-par quality program quality.
A unique opportunity is that the context-sensitive parser can provide additional context information for the LM that can help restoring the distribution.
Exploring post-training techniques that use parser hints to combat the distribution distortion issue is a promising future direction for generating high-quality, and correctness-guaranteed programs.

\newpage
\paragraph{Acknowledgment}
We thank Alain Dessureaux, Rich Sun, and Danny Diaz for the discussion of on-the-fly code generation in video games and for their feedback on the Dungeon Crawl Infinite game prototype. We thank Dawen Liang and Nathan Kallus for the discussion on LLM post-training techniques and for their feedback on the initial research plan. We also thank Hossein Taghavi for the general support on this project.

\bibliography{main}
\bibliographystyle{colm2025_conference}

\appendix
\section{Details on Tree of Parsers}\label{sec:tree_of_parsers_details}
We provide in this section our implementation of the ToP framework discussed in \cref{sec:tree_of_parsers}.
\subsection{Interactive CFG Parser}\label{subsec:interactive_cfg_parser}
Our tree-of-parsers framework (\cref{sec:tree_of_parsers}) requires a generator of \emph{interactive parsers} (IPs) from modular CFGs during parsing.
An interactive parser is a parser that provides the following functions:
\begin{itemize}[leftmargin=*]
   \item An \code{accepts()} function that returns the set of terminals (and their regexes) currently accepting.
   \item A \code{feed_text(s)} function that advances the parser state by consuming the input string \code{s} and raises an error if the input is invalid.
\end{itemize}
While several parser tools can be used to generate IPs, we use the LALR(1) implementation in Lark \citep{larkparser} for its linear time complexity and ease of use.
While this poses some restrictions on the allowed CFGs, most programming languages can be rewritten in a CFG that belongs to LALR(1).

\subsection{Implementation of ToP}\label{subsec:implementation_of_tree_of_parsers}
We use an object-oriented approach to define classes for the parser nodes.
Recall that each parser node is associated with a modular CFG, and the parser node is only responsible for parsing the program described by its modular CFG.
Each parser node class inherits from an abstract base class \code{BaseParser} that implements \code{has_finished}, \code{next_regex}, and \code{feed_text} --- we provide pseudocode (using Python syntax) for these functions in \cref{alg:base_parser} --- and needs to implement the following fields/functions
\begin{itemize}[leftmargin=*]
   \item
      \code{ip}: An IP for the modular CFG.
   \item
      \code{placeholders}: A set of terminals corresponding to placeholder terminals, indicating when to spawn children.
   \item
      \code{spawn_child(t)}: A function that returns a child parser corresponding to the placeholder \code{t}.
   \item
      \code{finish_child(t)}: A function that merges the result of the child parser corresponding to the placeholder \code{t} back to the parent parser.
\end{itemize}
Child classes of \code{BaseParser} may override functions \code{has_finished}, \code{next_regex}, and \code{feed_text} to implement more complex parsing logic.
Not included in \cref{alg:base_parser} is the lookahead strategy described at the end of \cref{sec:tree_of_parsers} to make the resulting regex satisfy \cref{assum:regex_no_extension}.
This needs to be applied on a case-by-case basis depending on the specific CFGs and the child parsers.

\begin{algorithm}[p]
   \caption{Base parser class for tree-of-parsers nodes}
   \label{alg:base_parser}
   \Fn{\code{BaseParser.next_regex(self)}}{
      \code{accepts} $\gets$ \code{self.ip.accepts()}\;
      \code{exist_nonplaceholder} $\gets$ \code{false}\;
      \If{\code{self.children} is empty}{
         \For{\code{t} \In \code{accepts}}{
            \If{\code{t} \In \code{self.placeholders}}{
               \code{self.children[t]} $\gets$ \code{self.spawn_child(t)}\;
            } \Else {
               \code{exist_nonplaceholder} $\gets$ \code{true}\;
            }
         }
      }
      \If{\code{self.children} is empty}{
         \tcp{We are at a leaf node, so collect regexes from nonplaceholder terminals.}
         \code{regexes} $\gets$ regexes from terminals \code{accepts} $\setminus$ \code{self.placeholders}\;
      }
      \Else{
         \If{\code{exist_nonplaceholder}}{
            \tcp{There are nonplaceholder terminals in addition to placeholders, so we need to make sure there is a continuation of the current parser state.}
            \code{self.children["<SELF>"]} $\gets$ spawn a copy of \code{self} with a clone of \code{self.ip}\;
         }
         \code{regexes} $\gets$ \code{[]}\;
         \For{\code{child} \In \code{self.children.values()}}{
            \code{regexes.append(child.next_regex())}
         }
      }
      \tcp{Not included in this pseudocode is the lookahead strategy to make the resulting regex satisfy \cref{assum:regex_no_extension}.
      This needs to happen before taking the union of the regexes.}
      \KwRet{\code{"(" + "|".join(regexes) + ")"}} 
   }
   \Fn{\code{BaseParser.feed_text(self, s)}}{
      \If{\code{self.children} is empty}{
         \code{self.ip.feed_text(s)}\;
      }
      \Else{
         \For{\code{t, child} \In \code{self.children.items()}}{
            \Try{}{
               \code{child.feed_text(s)}\;
            }\Except{}{
               \Del \code{self.child[t]} \tcp*{prune the errored child parser}
            }
         }
         \If{\code{self.children} is empty}{
            \Raise exception; \tcp{dead end}
         }\ElseIf{\code{self.children.keys() == \{"<SELF>"\}}}{
            \tcp{All other children failed; merge back to parent.}
            \code{self.ip} $\gets$ \code{self.children["<SELF>"].ip}\;
            \code{self.children} $\gets$ \code{\{\}}\;
         }\Else{
            \For{\code{t, child} \In \code{self.children.items()}}{
               \If{\code{child.has_finished()}}{
                  \tcp{One child succeeds in parsing; merge back to parent.}
                  \code{self.children} $\gets$ \code{\{\}}\;
                  \code{self.finish_child(t)}\;
                  \Break\;
               }
            }
         }
      }
   }
   \Fn{\code{BaseParser.has_finished()}}{
      \KwRet{\code{self.ip.accepts() == \{"\$END"\}}} \tcp*{\code{"\$END"} is a special terminal for the interactive parser to indicate the end of parsing}
   }
\end{algorithm}

\subsection{Using ToP for Parsing}
We describe in this section the simple algorithm (\cref{alg:ToP_parsing}) of using a context-sensitive parser with interface described in \cref{sec:cscg} to parse a complete or incomplete program.
\begin{algorithm}[h]
   \caption{Parsing a program using a context-sensitive parser}
   \label{alg:ToP_parsing}
   \KwIn{Context-sensitive parser \code{P}, program \code{prog}}
   \KwOut{\code{(success, finished)}, where \code{success} indicates whether parsing is successful without error, and \code{finished} indicates whether the the program is complete.}
   \code{i} $\gets$ 0\;
   \While{not \code{P.has_finished()}} {
    \code{r} $\gets$ \code{P.next_regex()} \tcp*{\code{r} satisfies \cref{assum:regex_no_extension}}
    \code{A} $\gets$ \code{build_DFA(r)}\;
    \code{s} $\gets$ \code{A.initial_state}\;
    \code{segment} $\gets$ \code{[]}\;
    \While{\code{s} $\notin$ \code{A.final_states}} {
        \Try{}{
            \code{s} $\gets$ \code{A.transition(s, prog[i])}\;
        }\Except{}{
           \KwRet{\code{False, False}} \tcp*{parsing error}
        }
        \code{segment.append(prog[i])}\;
        \code{i} $\gets$ \code{i + 1}\;
        \If{\code{i} \code{==} \code{len(prog)}} {
            \KwRet{\code{True, False}} \tcp*{incomplete program}
        }
    }
    \code{P.feed_text(segment)}\;
   }
\code{success} $\gets$ \code{i == len(prog)} \tcp*{handle the case where \code{prog} has extra characters at the end}
\KwRet{\code{success, success}}\;
\end{algorithm}

\section{Details on sLua Language and Parsing}\label{sec:slua_details}
\subsection{sLua Language}\label{subsec:slua_design}
Here are the main differences between sLua and Lua:
\begin{itemize}[leftmargin=*]
   \item
      Statements must end in \code{;}. This is to simplify the parsing algorithm to satisfy \cref{assum:terminal_follow} by having natural boundaries between statements.
   \item
      The basic types are \code{number}, \code{boolean}, and \code{string}.
      Custom fixed-size table types (such as \code{Actor} and \code{Coord}) are supported similar to named Lua tables.
      Dynamic data structures and \code{nil} pointers are not supported in favor of runtime correctness --- the scripting API usually has a workaround (e.g. via safe callback patterns as in \cref{ex:safe_callback}).
   \item 
      The definition of a local variable or function must fully specify the types using the syntax \code{local <var>: <type> = <exp>}.
      For instance,
\begin{lstlisting}
   local b: boolean = x == 3;
   local foo: (number) -> boolean = function(x) return x < 0; end;
   local target: Actor = nil;
   local coord: Coord = {x = 3, y = 4};
\end{lstlisting}
   \item
      For functions with non-void return types, all control branches must return with the correct return type.
   \item 
      Comments (e.g., \code{--} in Lua) are not supported to simplify parsing.
      It is possible to support comments by pausing the parsing process and resuming after the comment, but this would complicate the parsing algorithm and can cause tokenization issues.
\end{itemize}
Strong typing is essential, as knowing the type of an expression before parsing the expression significantly reduces the number of branches in the parsing tree, ultimately leading to our linear-time parsing guarantee (\cref{thm:slua_broom}).

We feed these specifications about sLua as part of the system prompt whenever we are generating sLua code.

\begin{example}\label{ex:slua_catalyst}
Below are two example sLua scripts for implementing a poisoned effect and a catalyst talent in DCI (\cref{sec:dci}).
The poisoned effect inflicts damage to the target at the end of each turn.
The catalyst talent doubles the poison count on the target if it is already poisoned, otherwise it inflicts a new poison effect.
Variable names prefixed with \code{g_} are global variables.
Note all methods are called with \code{.} instead of \code{:} in usual Lua.
In our Lua engine, we modified metatables for all involved classes so that \code{self} will be automatically passed as the first argument if what follows \code{.} is a function.
\begin{lstlisting}
interface ParamData {
    poison: number,
};

do
    NewEffect({
        name = "Poisoned",
        beneficial = false,
        detrimental = true,
        OnTurnEnd = function(param)
            param.owner.UpdateHealth(-param.data.poison);
        end,
        OnMerge = function(param, new_param)
            param.duration = g_math.Max(param.duration, new_param.duration);
            param.data.poison = param.data.poison + new_param.data.poison;
        end,
        GetDescription = function(param)
            return "Poisoned, taking " .. g_str.from_num(param.data.poison) ..
                " damage per turn.";
        end,
    });
end
\end{lstlisting}
\begin{lstlisting}
do
    local GetPoisonCount: (Actor) -> number = function(user)
        return 2 + user.GetTalentLevel();
    end;
    
    local range: number = 5;
    local duration: number = 5;
    
    NewTalent({
        name = "Catalyst",
        GetRange = function(user) return range; end,
        GetCooldown = function(user) return 12; end,
        Do = function(user)
            return user.WithEnemySelected(range, function(target)
                local poisoned: boolean = g_effects.poisoned.WhenExists(
                    target, 
                    function(param)
                        param.data.poison = param.data.poison * 2;
                    end);
                if not poisoned then
                    g_effects.poisoned.Apply(target, {
                        poison = GetPoisonCount(user)
                    }, duration);
                end
            end);
        end,
        GetDescription = function(user)
            return "Double the poison count on a target. If the target is not affected by Poison, then inflict " .. g_str.from_num(GetPoisonCount(user)) ..
             " poison.";
        end,
    });
end
\end{lstlisting}
\end{example}

\subsection{Implementation of ToP for sLua}\label{subsec:csp_slua}
We describe in this section how to implement the ToP for sLua.
Following the framework in \cref{subsec:implementation_of_tree_of_parsers}, for each parser node class, we first specify the corresponding modular CFG templates in Lark's grammar from which Lark can generate an IP.
The CFG itself contains the set of placeholders, and we further detail how to spawn/finish each type of child parser.
All placeholders will be represented using the format \code{<...>} in an itemized list, and we use Python's syntax for string formatting to indicate template slots (i.e. \code{\{...\}}).
The starting rule in each CFG is always called \code{start}.
Special treatments unique to each parser node class will be discussed in titled paragraphs.

Here are some comment strategies used across all parser nodes.
\paragraph{Handling whitespace}
The IPs generated by Lark will ignore all whitespace characters.
However, we must explicitly allow whitespace in the regexes returned by \code{next_regex} from all parser classes.
Hence, we prepend a regex representing any number of whitespace characters to the resulting regex from the root parser.
Moreover, whitespace can also be used to extend regexes to guarantee \cref{assum:regex_no_extension}---this is handled on a case-by-case basis.

\paragraph{Caching of IPs}
To prevent rebuilding interactive parsers from identically instantiated CFGs, we use an LRU cache in memory that maps CFG to the IP generated by Lark.
We observe that this simple caching strategy has led to a 3x speedup in parsing.

\subsubsection{Block parser}\label{subsubsec:block_parser}
A block in sLua is a sequence of statements.
We use the following CFG template for a block:
\begin{lstlisting}
start: block {end_symbols}
block: (stat)*
stat: "<DO_BLOCK>" | "<IF_BLOCK>" | "<FOR_BLOCK>" | "<WHILE_BLOCK>" | "<ASSIGNMENT_STAT>" | "<LOCALVARDEF_STAT>" | "<RETURN_STAT>" | ("<PREFIX_EXP>" ";") | "break" ";"
\end{lstlisting}
To instantiate this template, we need to populate slot \code{end_symbols}, which is a \code{|}-separated string of symbols that can end the block.
This is used to guarantee \cref{assum:terminal_follow} as recommended by our look-ahead strategy.
In most cases, \code{end_symbols} is just \code{end}.
sLua has the following kinds of statements:
\begin{itemize}[leftmargin=*]
   \item \code{<DO_BLOCK>}, \code{<IF_BLOCK>}, \code{<FOR_BLOCK>}, \code{<WHILE_BLOCK>}: control flow statements (\cref{subsubsec:control_flow_parsers})
   \item \code{<ASSIGNMENT_STAT>}: assignment statement (\cref{subsubsec:assignment_parser})
   \item \code{<LOCALVARDEF_STAT>}: local variable definition (\cref{subsubsec:localvardef_parser})
   \item \code{<RETURN_STAT>}: return statement (\cref{subsubsec:return_stat_parser})
   \item \code{<PREFIX_EXP>}: expression statement such as function calls (\cref{subsubsec:prefix_exp_parser})
   \item \code{break} statement to break out of a loop
\end{itemize}
The generated IP will expect mainly placeholders (except for \code{;} and \code{break}). 
In \code{spawn_child}, we simply return an instance of the corresponding parser class for each placeholder, and in \code{finish_child}, we simply feed the placeholder token to advance the IP.

\paragraph{Ensuring non-extensible match}
Because a keyword (\code{do}, \code{if}, \code{for}, \code{while}, \code{return}, \code{local}) can be a prefix of a variable in scope, the implementation of \code{BaseParser.next_regex} in \cref{alg:cscg} can break \cref{assum:regex_no_extension}.
For example, a variable might be named \code{do_it}, and, without careful handling, the resulting regex can match both \code{do} and \code{do_it}.
To handle this, we add an extra whitespace regex after each of these keywords.
In this way, when using the same example, the regex will match \code{"do "} and \code{do_it}.
Another case is for \code{end_symbols} to be a prefix of a variable, and we handle this similarly.

\paragraph{Forcing return statement}
If the block is the top-most block inside a function definition and the function's return type is not void, and not all control flows contain a return statement, then the block must end with a return statement.
We detect this situation by tracking the scope of the function definition and whether each control flow branch has returned.
If a return statement must be forced, we remove \code{end_symbols} from the accepting set of terminals of the IP of the block parser, forcing the parser to match a return statement before the block ends.

\paragraph{Handling break statement}
A break statement is allowed only if there is a while loop or a for loop in one of the parent scopes and there is no function definition in between.
We use a similar technique as in how we force the return statement to allow only \code{break} in the accepting set of terminals if this is the case.

\subsubsection{Control flow parsers}\label{subsubsec:control_flow_parsers}
The control flow statements are straightforward to parse.
We use the following CFG for each type of control flow parser:
\begin{lstlisting}
do_block: "do" "<BLOCK>" "end"
if_block: "if" "<BOOL_EXP>" "then" "<BLOCK>" ("elseif" "<EXP>" "then" "<BLOCK>")* ("else" "<BLOCK>")? "end"
while_block: "while" "<BOOL_EXP>" "do" "<BLOCK>" "end"
for_block: "for" NEW_VAR_NAME "=" "<NUM_EXP>" "," "<NUM_EXP>" ("," "<NUM_EXP>")? "do" "<BLOCK>" "end"
\end{lstlisting}
Upon encountering each placeholder, the parser will spawn the corresponding child parser.
\begin{itemize}[leftmargin=*]
   \item \code{<BOOL_EXP>}: spawn a boolean expression parser (\cref{subsubsec:bool_exp_parser})
   \item \code{<NUM_EXP>}: spawn a number expression parser (\cref{subsubsec:num_exp_parser})
   \item \code{<BLOCK>}: spawn a block parser (\cref{subsubsec:block_parser})
\end{itemize}

\paragraph{New variable name terminal}
For \code{for} loops, the terminal \code{NEW_VAR_NAME} is a regex that matches any valid variable name.
We require variable names to start with an English letter or underscore, followed by the regex \code{/\\w*/}.
We require the variable names to be at most 50 characters long.

\subsubsection{Assignment Statement Parser}\label{subsubsec:assignment_parser}
We use the following CFG for parsing assignment statements:
\begin{lstlisting}
start: assignment_stat ";"
assignment_stat: "<PREFIX_EXP>" "=" "<EXP>"
\end{lstlisting}
\begin{itemize}[leftmargin=*]
   \item \code{<PREFIX_EXP>}: spawn a \emph{mutable} prefix expression (\cref{subsubsec:prefix_exp_parser}). The user can specify which fields of a type are mutable, which is incorporated in the type-inference algorithm in the prefix expression parser.
   \item \code{<EXP>}: spawn an expression parser (e.g. \cref{subsubsec:bool_exp_parser}) of the target type determined by the preceding prefix expression parser
\end{itemize}

\subsubsection{Local Variable Definition Parser}\label{subsubsec:localvardef_parser}
We use the following CFG for parsing local variable definitions:
\begin{lstlisting}
start: localvardef_stat ";"
localvardef_stat: "local" NEW_VAR_NAME ":" "<TYPE_SPEC>" "=" "<EXP>"
\end{lstlisting}
The terminal \code{NEW_VAR_NAME} is defined in the same way as in the parser for for loops (\cref{subsubsec:control_flow_parsers}).
\begin{itemize}[leftmargin=*]
   \item \code{<TYPE_SPEC>}: spawn a type specification parser (\cref{subsubsec:type_spec_parser}) with available types in scope
   \item \code{<EXP>}: spawn an expression parser (e.g. \cref{subsubsec:bool_exp_parser}) of the target type determined by the type specification parser
\end{itemize}
After the parser finishes, we register in the environment at the current scope the new variable name with the specified type.

\subsubsection{Return Statement Parser}\label{subsubsec:return_stat_parser}
Return statements use the following simple CFG:
\begin{lstlisting}
start: "return" ("<EXP>")? ";"
\end{lstlisting}
Constructing a return statement parser requires a return type.
\begin{itemize}[leftmargin=*]
   \item \code{<EXP>}: spawn an expression parser of the target type determined by the return type.
\end{itemize}
If the return type is \code{void}, then we remove the placeholder \code{<EXP>} from the accepting set of terminals.

\subsubsection{Type Specification Parser}\label{subsubsec:type_spec_parser}
Type specification uses the following CFG template:
\begin{lstlisting}
start: type_spec {end_symbols}
type_spec: base_type | function_type
function_type: "(" (BASE_TYPE ("," BASE_TYPE)*)? ")" "->" type_spec
BASE_TYPE: {base_type}
\end{lstlisting}

The terminal \code{BASE_TYPE} is a regex that matches any valid base type in sLua, including \code{number}, \code{boolean}, \code{string}, \code{void} (only for function return type) and table types (\cref{subsubsec:table_exp_parser}).
We populate the template slot \code{base_type} by querying the current environment when creating the node.
In sLua, to simplify the parsing, we allow only one return value from functions.
We also disallow nested function types (i.e. function arguments can only be base types) to prevent infinite recursions at runtime (\cref{thm:provably_error_free}).

\paragraph{Ensuring non-extensible match}
The \code{BASE_TYPE} terminal can break \cref{assum:regex_no_extension} if a base type is a prefix of another base type, e.g. \code{Actor} and \code{ActorInfo}.
As such, we use the look-ahead strategy described after \cref{assum:terminal_follow} and append the regex after \code{BASE_TYPE}.
We use the same strategy for the other terminals that appear later, such as \code{NUMBER} and \code{BOOLEAN}.

\subsubsection{Prefix Expression Parser}\label{subsubsec:prefix_exp_parser}
A prefix expression is a typed expression that represents field access or function calls, for instance \code{foo.bar} or \code{foo(x,y).bar(z,w)}.
At creation of the parser node, the target type of the prefix expression must be given.
We use the following Lark grammar template for a prefix expression:
\begin{lstlisting}
start: prefixexp {end_symbols}
prefixexp: ("<FIELD>" ("(" "<ARGLIST>" )? ".")* "<FIELD>" ("(" "<ARGLIST>")?
\end{lstlisting}
Note that we do not allow for leading parentheses, e.g., \code{(foo.bar).boo}.
Although this can be supported in the ToP parsing framework, it will slightly break the broom-shaped assumption \cref{assum:broom} because there will be ambiguity between the prefix expression and the expressions.
When constructing a prefix expression parser, as before, we need to provide \code{end_symbols} to satisfy \cref{assum:terminal_follow}.
We also need to specify the target type of the prefix expression.
\begin{itemize}[leftmargin=*]
   \item \code{<FIELD>}: upon encountering \code{<FIELD>}, the parser will inspect the current environment to identify the set of symbols (variable names, field names, or function names) that can \emph{eventually} lead to a prefix expression of the target type. 
   Then a special type of parser node will be spawned which will essentially be a fixed regex that is the union of all allowed fields.
   See the paragraph below for details.
   \item \code{<ARGLIST>}:
      when the preceding \code{<FIELD>} is a function, the parser will spawn an argument list parser child node (\cref{subsubsec:arg_list_parser}) to parse the argument list.
      Note that by design the right parentheses \code{)} are not included in the grammar as they are part of the CFG of the argument list parser---this is to ensure \cref{assum:terminal_follow}.
\end{itemize}

\paragraph{Tracing symbols that can result in a target type}
To supply the set of symbols when forming the regex for \code{<FIELD>} terminal, we use a depth-first search algorithm to trace the type graph and identify a candidate set of types that can result in the target type. 
This involves tracing through fields types and function return types.
Then, while gradually building the prefix expression, we keep track of the type of the current prefix and find fields of the prefix type whose types belong to the candidate set.
We allow a few variants, one for searching for only mutable fields (used for assignment statements), and one for searching for function calls (used when the prefix expression is used as a statement as opposed to an expression).

\subsubsection{Argument List Parser}\label{subsubsec:arg_list_parser}
The argument list parser takes in a list of types corresponding to arguments and has the following simple CFG:
\begin{lstlisting}
start: ("<EXP>")* ")"
\end{lstlisting}
\begin{itemize}[leftmargin=*]
   \item \code{<EXP>}: spawn an expression parser with the target type being the type in that position from the prescribed type list
\end{itemize}
The accepting set of terminals is modified so that the precise number of arguments will be parsed.

\subsubsection{Number Expression Parser}\label{subsubsec:num_exp_parser}
Next we will show how to build parser nodes for expressions of various types.
Since sLua is strongly typed, by the time we need to parse the expression we know what type it should be, thus reducing the amount of type inference to a minimal level (we still need some inference for the boolean type).

We start with number expressions and show how we can simultaneously exploit the expressiveness of CFG while incorporating context-sensitive information using the ToP framework.
\begin{lstlisting}
start: num_exp {end_symbols}
num_exp: num_sum
num_sum: num_product | num_sum add_op num_product
add_op: "+" | "-"
num_product: num_atom_signed | num_product mul_op num_atom_signed
mul_op: "*" | "/"
num_atom_signed: add_op? num_atom
num_atom: NUMBER | "(" num_exp ")" | "<NUM_PREFIX_EXP>"
\end{lstlisting}
Here, we omit the definition of the terminal \code{NUMBER}, which is a regex that matches any valid number in sLua.
Like with a block parser, the CFG template has a slot \code{end_symbols} to satisfy \cref{assum:terminal_follow} since the number expression can end in a terminal breaking \cref{assum:regex_no_extension}.
\begin{itemize}[leftmargin=*]
   \item \code{<NUM_PREFIX_EXP>}: spawn a prefix expression parser (\cref{subsubsec:prefix_exp_parser}) with the target type being a number
\end{itemize}

To keep things simple, we do not allow ternary operators in the expression parser.
Therefore, the only type that can appear in a number expression is the number type.

\subsubsection{String Expression Parser}\label{subsubsec:str_exp_parser}
The parser for string expressions uses the following CFG template:
\begin{lstlisting}
start: str_exp {end_symbols}
str_exp: str_sum
str_sum: str_atom (".." str_atom)*
str_atom: STRING | "(" str_exp ")" | "<STR_PREFIX_EXP>"
\end{lstlisting}
The only operator allowed is concatenation, denoted by \code{..} as in Lua.
\begin{itemize}[leftmargin=*]
   \item \code{<STR_PREFIX_EXP>}: spawn a prefix expression parser (\cref{subsubsec:prefix_exp_parser}) with the target type being a string
\end{itemize}

\subsubsection{Boolean Expression Parser}\label{subsubsec:bool_exp_parser}
Boolean expressions are more complex to parse because it allows numbers and strings to also form boolean expressions.
We use the following grammar template:
\begin{lstlisting}
start: bool_exp {end_symbols}
bool_exp: or_exp
or_exp: and_exp ("or" and_exp)*
and_exp: comparison ("and" comparison)*
comparison: bool_comparison | num_comparison | str_comparison
bool_comparison: not_exp (bool_cmp_op not_exp)*
bool_cmp_op: "==" | "~="
not_exp: "not" not_exp | bool_atom
bool_atom: BOOLEAN | "(" bool_exp ")" | "<BOOL_PREFIX_EXP>"
num_comparison: num_exp num_cmp_op num_exp
num_cmp_op: "==" | "~=" | "< " | "> " | "<=" | ">="
str_comparison: str_exp bool_cmp_op str_exp
\end{lstlisting}
The terminal \code{BOOLEAN} is defined as \code{BOOLEAN: "true" | "false"}.
The rules \code{num_exp} and \code{str_exp} are the same as those in \cref{subsubsec:num_exp_parser} and \cref{subsubsec:str_exp_parser}, respectively.
One sharp bit is that in the \code{num_cmp_op} rule, we use literal \code{"< "} and \code{"> "} instead of \code{"<"} and \code{">"} to avoid ambiguity with literals \code{"<="} and \code{">="} to satisfy \cref{assum:regex_no_extension}\footnote{This will require always having a space after \code{"<"} and \code{">"} in the code, which is a good formatting style.}.
As with Lua, all types can be considered as a boolean type and we incorporate this into the type-tracing algorithm for the boolean prefix expression parser.

\paragraph{Union-type prefix expression}
The set of placeholders for boolean expression parser is: \code{<BOOL_PREFIX_EXP>}, \code{<NUM_PREFIX_EXP>}, and \code{<STR_PREFIX_EXP>}.
During parsing, there can be different types of prefix expression parser to spawn at the same time.
For instance, suppose that we just parsed an \code{if} keyword and are expecting the condition expression which is a boolean.
Then all of \code{true}, \code{1 + 1 < 2}, and \code{"foo" .. "bar" == "foobar"} are valid boolean expressions.
One option is to spawn multiple child prefix parsers, one for each placeholder in the accepting set of the IP, but this can be wasteful, as differently typed prefix expressions can share a common prefix string.
Instead, we can spawn a union-type prefix expression parser that can handle multiple types of prefix expression and rely on type inference when searching for the set of valid fields.
This way, there is only a single child parser being spawned.
\begin{remark}
An alternative way to build the boolean expression parser is to include placeholders in the grammar to recursively spawn expression parsers for numbers and strings.
While this would simplify the grammar, it would also increase the number of branches in the parsing tree, and the depth of the tree can grow proportionally with the length of the expression.
\end{remark}
\paragraph{Ensuring non-extensible match} 
The field regex can break \cref{assum:regex_no_extension} if a field is a prefix of another field, for example, \code{user.power} and \code{user.powerups}.
To handle this, we use the same look-ahead strategy described after \cref{assum:terminal_follow} and append a regex representing the possible symbols after the field name.

\subsubsection{Function Expression Parser}\label{subsubsec:func_exp_parser}
Function expressions in sLua can be a variable of a function type or an inline function definition.
We thus use the following CFG template:
\begin{lstlisting}
start: func_exp {end_symbols}
func_exp: func_def | "<FUNC_PREFIX_EXP>"
func_def: "function" "(" (NEW_VAR_NAME ("," NEW_VAR_NAME)*)? ")" "<BLOCK>" "end"
\end{lstlisting}
Constructing a function expression parser requires the function type (i.e. a type list for all arguments and a return type) to be specified.
\begin{itemize}[leftmargin=*]
   \item \code{<FUNC_PREFIX_EXP>}: spawn a prefix expression parser with the prescribed function type
   \item \code{<BLOCK>}: spawn a block parser (\cref{subsubsec:block_parser}) to parse the function body with an updated scope for the function arguments
\end{itemize}

\subsubsection{Table Expression Parser}\label{subsubsec:table_exp_parser}
Lastly, an expression can be a fixed-shape table type defined by the user.
A table expression can be either a table initialization or a prefix expression that evaluates a table.
For table initialization, a field can be required or optional.
Here is the grammar template:
\begin{lstlisting}
start: table_exp {end_symbols}
table_exp: "<TABLE_PREFIX_EXP>" | table_initialization
table_initialization: "{" table_init "}"
table_init: (key_eq_value ("," key_eq_value)* ","?)? 
key_eq_value: "<KEY>" "=" "<EXP>"
\end{lstlisting}
Here is how each placeholder is handled:
\begin{itemize}[leftmargin=*]
   \item \code{<TABLE_PREFIX_EXP>}: spawn a prefix expression parser (\cref{subsubsec:prefix_exp_parser}) with the target type being a table type
   \item \code{<KEY>}: upon encountering \code{<KEY>}, the parser first finds a set of keys that have not yet been used in the table initialization, and forms a regex that matches any of these keys
   \item \code{<EXP>}: spawn an expression parser (e.g. \cref{subsubsec:bool_exp_parser}) with the target type being the value type of the key-value pair in the definition of the table type
\end{itemize}
To make sure that all required keys are present in the table initialization, if not all required keys are present, the parser will not allow the table initialization to end by removing \code{\}} from the set of accepting terminals.

\subsection{Linear-Time Parsing of sLua with ToP}\label{subsec:linear_time_parsing}
We now prove \cref{thm:slua_broom}, that the implementation of ToP in \cref{subsec:csp_slua} gives constant-time functions \code{next_regex} and \code{has_finished}, and its \code{feed_text} has linear time complexity in the size of the input text segment.
This implies that when parsing a program of length $n$ using \cref{alg:ToP_parsing}, the time complexity is $O(n)$.

\begin{lemma}\label{lem:slua_broom}
    Every node class defined in \cref{subsec:csp_slua} satisfies \cref{assum:broom}.
\end{lemma}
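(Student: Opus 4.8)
\textbf{Reformulation and overall plan.} The plan is to prove the clean equivalent form of \cref{assum:broom}: at every moment during parsing, at most one node of the current tree has more than one child. This is equivalent to the stated assumption. Indeed, suppose two distinct nodes $x,y$ both had more than one child; in a rooted tree either one is an ancestor of the other, or their lowest common ancestor $z$ has the two subtrees containing $x$ and $y$ as distinct children. In the first case the ancestor node violates \cref{assum:broom} directly; in the second case $z$ has more than one child and is an ancestor of, say, $x$, which also has more than one child, again contradicting \cref{assum:broom}. Conversely, if at most one node ever branches, then that node has no branching ancestor, so \cref{assum:broom} holds. Since \cref{subsec:csp_slua} defines only finitely many node classes, each tied to a fixed modular CFG, I would establish the single-branching-node invariant by a finite case analysis driven by the accepting set of each interactive parser together with the spawning logic of \cref{alg:base_parser}.

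\textbf{Classifying where branching occurs.} First I would read off, for each class, when \code{next_regex} creates more than one child. By \cref{alg:base_parser} this happens exactly when the accepting set contains two or more placeholders, or contains a placeholder alongside a non-placeholder terminal (forcing a \code{<SELF>} copy). The \emph{connector} classes --- type specification, local variable definition, assignment, return, argument list, table key--value, and the control-flow heads \code{do}/\code{if}/\code{while}/\code{for} --- expose at most one placeholder at a time and never a placeholder beside a competing non-placeholder terminal, so each spawns a single child; here the fixed-arity modification of \cref{subsubsec:arg_list_parser} is what removes the otherwise-present \code{")"}-versus-\code{<EXP>} ambiguity. The classes that genuinely branch are the typed expression parsers (number, string, boolean, table) and the function-expression parser, each of which pairs a \code{<SELF>} copy with a single prefix child --- where the union-type construction of \cref{subsubsec:bool_exp_parser} is essential, collapsing the three prefix placeholders of a boolean expression into one child --- together with the block parser at a statement boundary.

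\textbf{Showing branching is isolated.} The core of the argument is that whenever one of these nodes branches, all of its ancestors have exactly one child, so the branching node is unique. For the expression and function parsers I would argue the branching is \emph{transient}: the first consumed token decides between a literal/operator (the \code{<SELF>} branch) and a field access (the prefix branch), so the branching resolves in a single segment, and each branch is meanwhile a single-child chain ending in a \code{FIELD} leaf. Crucially, such a parser is only ever created by a connector that supplies its target type, so by strong typing exactly one typed expression child exists, and walking up the spawn chain every connector above it is single-child; the block that ultimately contains the statement has by then consumed a disambiguating keyword, leaving it with one child. For block statement-start branching the alternatives begin with distinct reserved words (\code{do}, \code{if}, \code{for}, \code{while}, \code{return}, \code{local}, \code{break}), protected by the whitespace guard of \cref{subsubsec:block_parser}, so each such branch is a one-token leaf and the branching collapses after one segment, before any descendant expression parser is reached.

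\textbf{The main obstacle.} The hard part is the one surviving multi-segment ambiguity at the block level: an expression statement \code{"<PREFIX_EXP>" ";"} versus an assignment \code{"<PREFIX_EXP>" "=" ...}, since both begin with a prefix expression and can share a long common prefix. I must rule out that, while these two children coexist, the parser descends into the argument list of a function call inside the shared prefix, which would create a second (expression-level) branching node and break the invariant. The lever is that the two surviving children use different field searches (\cref{subsubsec:prefix_exp_parser}): the statement child admits only function-call prefixes while the assignment child admits only mutable lvalue prefixes, and the ban on leading parentheses noted in \cref{subsubsec:prefix_exp_parser} is exactly the restriction that keeps these shapes from recombining. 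I would make this precise by maintaining, along the shared prefix, the invariant that the two children agree character-for-character up to the first token that distinguishes opening a call (\code{"("}) from continuing a field access or reaching \code{"="}, and then verify that this distinguishing token is consumed by the block --- pruning one child --- before any argument-list child, and hence any deeper branching expression parser, is ever spawned. Pinning down that this divergence token is reached before any call argument list opens, rather than the routine single-placeholder checks for the connector classes, is where the real work lies, and it is precisely the place the sLua language restrictions were designed to control.
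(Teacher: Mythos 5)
Your overall strategy is essentially the paper's: an exhaustive case analysis over the node classes of \cref{subsec:csp_slua}, with the union-type prefix expression of \cref{subsubsec:bool_exp_parser} and sLua's front-loaded type annotations doing the real work of keeping every class other than the block and the expression-family parsers single-child. Your reformulation of \cref{assum:broom} as ``at most one node of the tree branches at any time'' is correct, and your classification of which classes can branch matches the paper's. The difference is mostly organizational: the paper argues by induction on program length, walking up the chain $p(x), p(p(x)), \dots$ for each class to check that every ancestor is single-child at the moment $x$ branches, whereas you maintain a global single-branching-node invariant and argue each branching event is isolated and transient.

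The gap is in the case you yourself flag as ``where the real work lies,'' and it is not closed. When a block keeps both the assignment child and the expression-statement child alive over a shared prefix, you propose to show that the character \code{(} is the divergence token, pruning one child before any argument-list parser (and hence any nested, branching expression parser) is ever spawned. That step is asserted rather than proved, and the specific mechanism you sketch is in tension with the construction: the paper's own proof explicitly treats an argument list nested inside the prefix expression on the \emph{left-hand side} of an assignment, i.e. it does not assume that \code{(} kills the assignment branch. The paper instead disposes of this case by appealing to the block-parser induction to conclude that the block is already down to one child when the nested expression parser branches. Making either version airtight requires pinning down how the mutable-field search versus the function-call search of \cref{subsubsec:prefix_exp_parser} forces the block to a single child before the first \code{(} of the statement is consumed; your route additionally needs the language restriction that lvalues contain no function calls, which the paper does not impose. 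So you have located the correct pressure point --- more explicitly than the paper does --- but the decisive step remains a plan, not a proof.
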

\begin{proof}[Proof of \cref{lem:slua_broom}]
We will prove by induction on the program length. 
The base case is trivial, so from now on, we assume \cref{assum:broom} is satisfied when parsing any shorter program.
A node $x$ can be spawned in one of two cases: either when its parent parser has $x$'s placeholder in the accepting terminal set, or $x$ is a copy of the parent parser since there exist nonplaceholder terminals.
Hence, it amounts to checking these two cases for all node classes.
For clarity, we will use $p(x)$ to indicate the parent node of $x$.
\begin{itemize}[leftmargin=*]
    \item If $x$ is a block parser (\cref{subsubsec:block_parser}), if $x$ needs to spawn multiple children, it means that we are expecting statements.
    \begin{itemize}
        \item
            If the block parser has already parsed some statements, then by induction, right before parsing the previous statement, any ancestor of $x$ had one child.
            Since the tree of parsers did not change structure between the previous and the next statements, we conclude that, currently that any ancestor of $x$ also has only one child.
        \item 
            If the block parser is expecting the first statement, we notice that the placeholder \code{<BLOCK>} appears only in the CFG templates of the control flows and the function definition. In these cases, $x$ is the only child of $p(x)$.
            \begin{itemize}
            \item
            In the case where $p(x)$ is a control flow parser, since $p(x)$ can only be children of a block parser, we know by induction that when $p(x)$ is spawned, any ancestor of $p(p(x))$ has one child.
            Upon parsing the first keyword of the control flow (e.g. \code{if}), all other children of the block parser $p(p(x))$ will be pruned. Hence, all ancestors of $p(x)$ have a single child.
            \item 
            In the case where $p(x)$ is a function expression parser (\cref{subsubsec:func_exp_parser}), by looking at the CFG of $p(x)$, we notice that at the start of the function expression $p(x)$ will have two children, one for function definition and one for function prefix exp. 
            By induction, this means that at that time, any ancestor of $p(x)$ has one child.
            Since the ancestors of $p(x)$ do not change when $x$ is spawned, we verified that all ancestors of $p(x)$ have a single child.
            \end{itemize}
            
    \end{itemize}
    \item  If $x$ is any of: control flow parser (\cref{subsubsec:control_flow_parsers}), local variable definition parser (\cref{subsubsec:localvardef_parser}), return statement parser (\cref{subsubsec:return_stat_parser}), assignment statement parser (\cref{subsubsec:assignment_parser}), argument list parser (\cref{subsubsec:arg_list_parser}), and prefix expression parser (\cref{subsubsec:prefix_exp_parser}), by inspecting the CFG of $x$ and our implementation of accepting terminal sets, we realize only one child parser of $x$ will be spawned at any time.
    For instance, for prefix expressions, once we parse a field name, we know whether it is a field or a function, and we filter the accepting sets accordingly to disambiguate the rules.
    \item If $x$ is a type specification parser (\cref{subsubsec:type_spec_parser}), it does not have any placeholder, so $x$ can only be a leaf node.
    \item If $x$ is an expression parser (\cref{subsubsec:num_exp_parser}, \cref{subsubsec:str_exp_parser}, or \cref{subsubsec:bool_exp_parser}), first of all, because of the union-type introduced in \cref{subsubsec:bool_exp_parser}, $x$ will have at most one prefix expression child parser.
    If $x$ needs to spawn both a copy of itself as a child and a prefix expression child parser, we need to verify all ancestors of $x$ have only one child. These are the cases where $x$ can be a child parser:
    \begin{itemize}[leftmargin=*]
        \item If $p(x)$ is a local variable definition parser, an assignment parser, an if block, a while block, or a for block, then by induction on block parsers, we know when $p(x)$ is spawned, any ancestor of $p(x)$ has one child. We conclude by noticing that $x$ is the old child of $p(x)$.
        \item If $p(x)$ is an argument list parser, then $p(p(x))$ has to be a prefix expression parser and both $p(p(x))$, $p(x)$ have a single child.
        Now $p(p(p(x)))$ has the following possibilities:
        \begin{itemize}
            \item 
            An assignment statement parser, in which case $p(p(x))$ is the prefix expression on the left-hand side of an assignment statement and $p(p(p(x)))$ has only one child.
            Again by induction on block parsers, we know all ancestors of $p(p(p(x)))$ have a single child.
            \item 
            A block parser, in which case $p(p(x))$ is parsing a function call and we conclude similarly.
            \item 
            An expression parser of any type.
            Thanks to the union-type prefix expression used for boolean prefix expressions (\cref{subsubsec:bool_exp_parser}), any expression parser can have at most one prefix expression child parser.
            Since we are already parsing an argument list inside a function all, the other self-copy child of $p(p(p(x)))$ must have been pruned, so $p(p(p(x)))$ only has one child.
            Moreover, by induction on expression parsers and the fact that spawning a prefix expression parser child inside any expression parser will also spawn a self-copy child, we know ancestors of $p(p(p(x)))$ only have one child each.
        \end{itemize}
        \item If $p(x)$ is a table expression parser (\cref{subsubsec:table_exp_parser}), then the CFG suggests $x$ is the only child of $p(x)$. We conclude that ancestors of $p(x)$ only have a single child each by arguing inductively on the table expression parser $p(x)$.
    \end{itemize}
    
\end{itemize}
\end{proof}

\begin{proof}[Proof of \cref{thm:slua_broom}]
First, since \code{has_finished} is equivalent to checking whether a special terminal is in the accepting set, it takes a constant time.

Our construction of the modular CFG templates implies that the depth of the ToP during parsing is bounded by the number of nested function calls and the number of scopes (i.e. nested blocks). 
Since both numbers are assumed to be bounded, the depth is also bounded by a constant.
By \cref{lem:slua_broom}, the ToP will be in a broom shape, so its size will be bounded by its depth multiplied by the maximum number of children of any node.
The maximum number of children of a node is bounded by the number of its placeholders in its CFG plus one.
The block parser (\cref{subsubsec:block_parser}) has the most number of placeholders (8), so we conclude that the total size of ToP is bounded by a constant.

To analyze the time complexity of \code{next_regex} of the root node, note that it iterates over ToP, spawning children and collecting regexes in the process.
The cost of spawning a node is dominated by creating an interactive parser (\cref{subsec:interactive_cfg_parser}) from its CFG.
Since variable names are at most 50 characters long (see the end of \cref{subsubsec:control_flow_parsers}) and the number of variables is bounded, the size of each instantiated CFG is also bounded by a constant, so the creation of parser nodes takes constant time.
Collecting regexes requires scanning over accepting terminals, the number of which is also bounded by a constant.
Hence \code{next_regex} has a time complexity proportional to the size of ToP, which is constant.

For \code{feed_text}, since we use the LALR(1) implementation of \citet{larkparser}, advancing each interactive parser takes a linear time in the input text.
Since we at most feed the same text to all nodes in the tree which has constant size, overall the \code{feed_text} function of the root note takes a linear time in the input text segment.
\end{proof}

\section{Details on Dungeon Crawl Infinite}\label{sec:dci_details}
\subsection{Game Flow of DCI}\label{subsec:dci_flow}
We briefly describe what a typical game flow looks like for DCI.

The player starts the game by generating an adventure given a text prompt (left of \cref{fig:dci_world}).
This generates the entire adventure (see \cref{subsec:adventure_agent} for details) and the process takes 5-10 minutes.
The adventure includes the player's starting talent category, the enemies' talent category, the enemy types, and all kinds of narrative text. 
Afterwards, the player enters the game world (right of \cref{fig:dci_world}).
The game world consists of three levels, with progressively more challenging enemies.
The main quest of the game (left of \cref{fig:dci_quest}) is to defeat the final boss located at the last level.
The game combat is turn-based and each level is a grid (right of \cref{fig:dci_quest}).
The main progression of the game is through evolution in which the player sends EXP (obtained by defeating enemies) to either improve their stats, level up their talents, or evolve new talents (and the accompanied effects).
Talent evolution (\cref{fig:dci_evolve}) triggers on-the-fly code generation given a player's text prompt, and the player can choose from one of the three generated talents.

All the game UI follows a minimalist design, with texts replacing textures whenever they would normally be needed (e.g., each actor is a moving text).
\begin{figure}[h]
    \centering
    \includegraphics[width=0.49\linewidth]{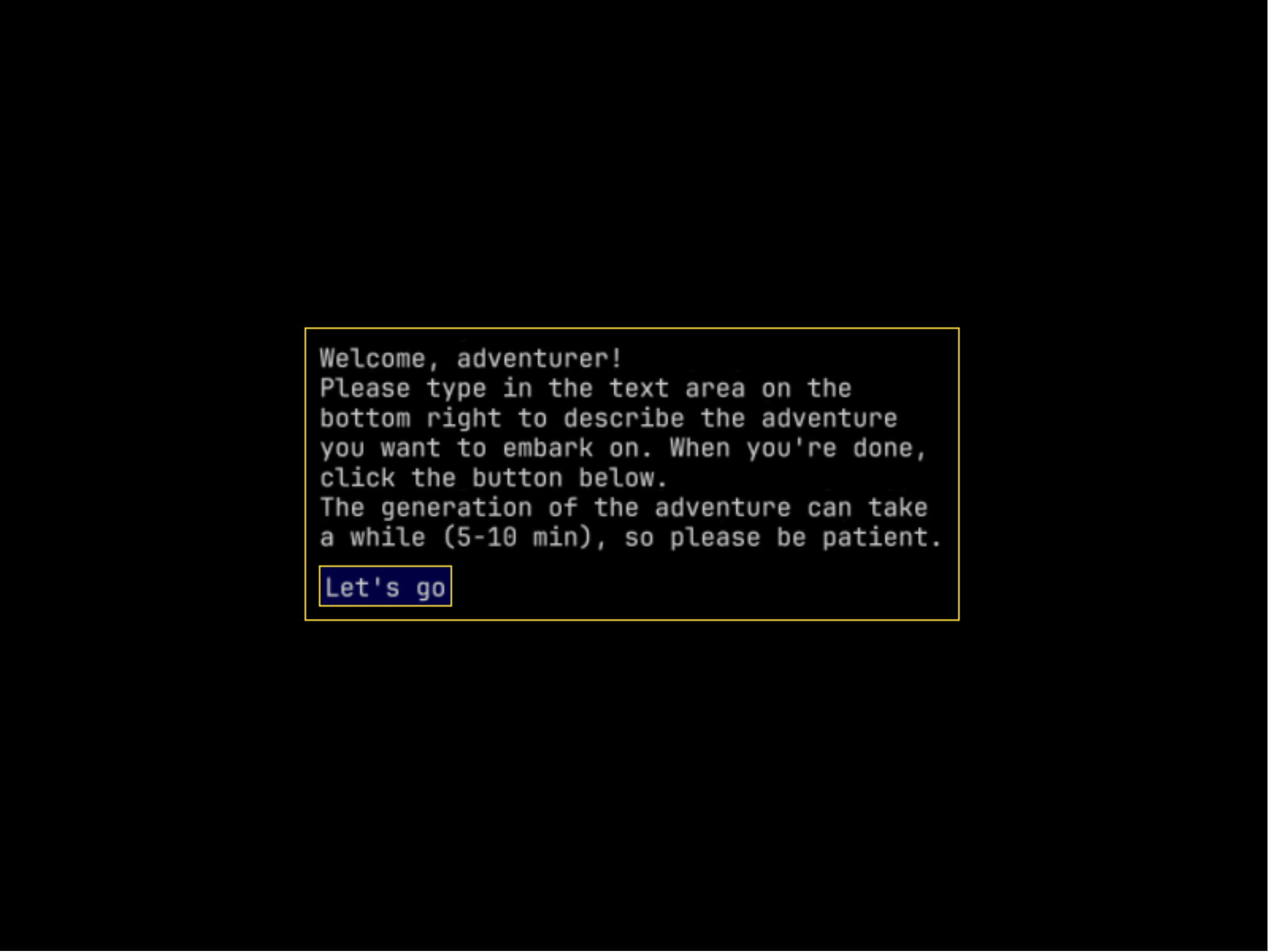}
    \includegraphics[width=0.49\linewidth]{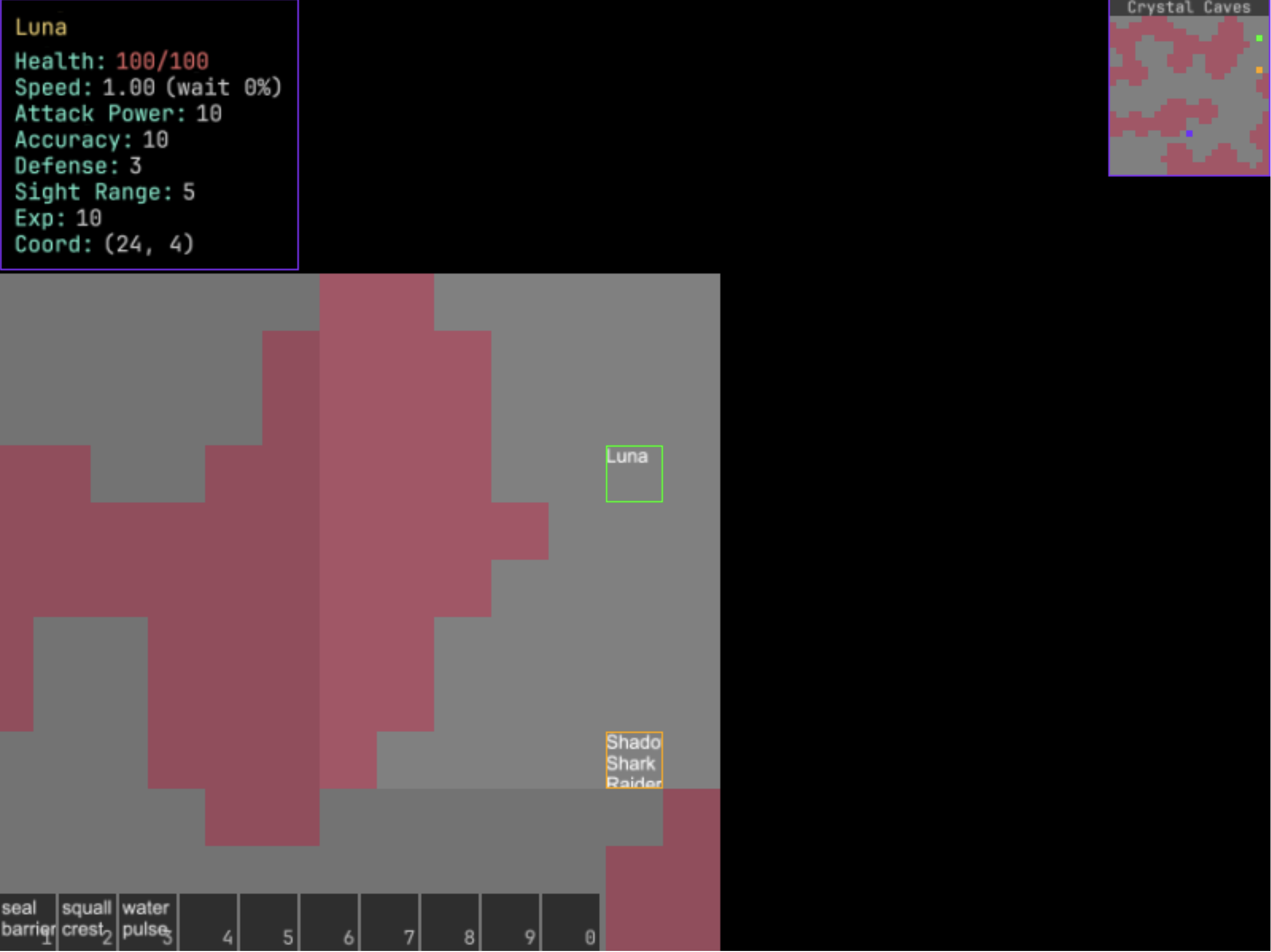}
    \caption{Left: starting screen for DCI before entering an adventure prompt. Right: generated world after adventure generation is finished given the prompt ``an aquatic adventure as a seal warrior."}
    \label{fig:dci_world}
\end{figure}

\begin{figure}[h]
    \centering
    \includegraphics[width=0.49\linewidth]{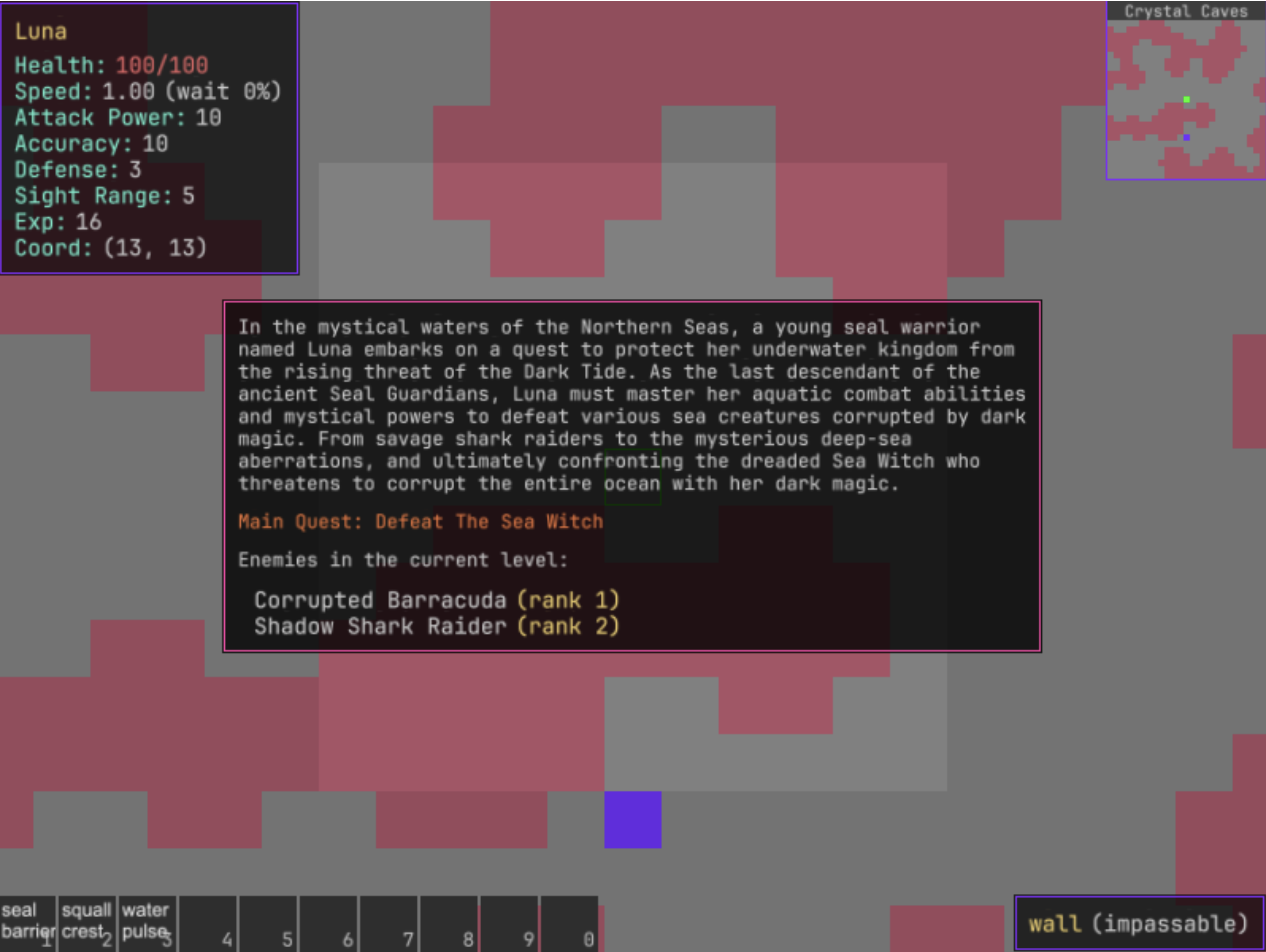}
    \includegraphics[width=0.49\linewidth]{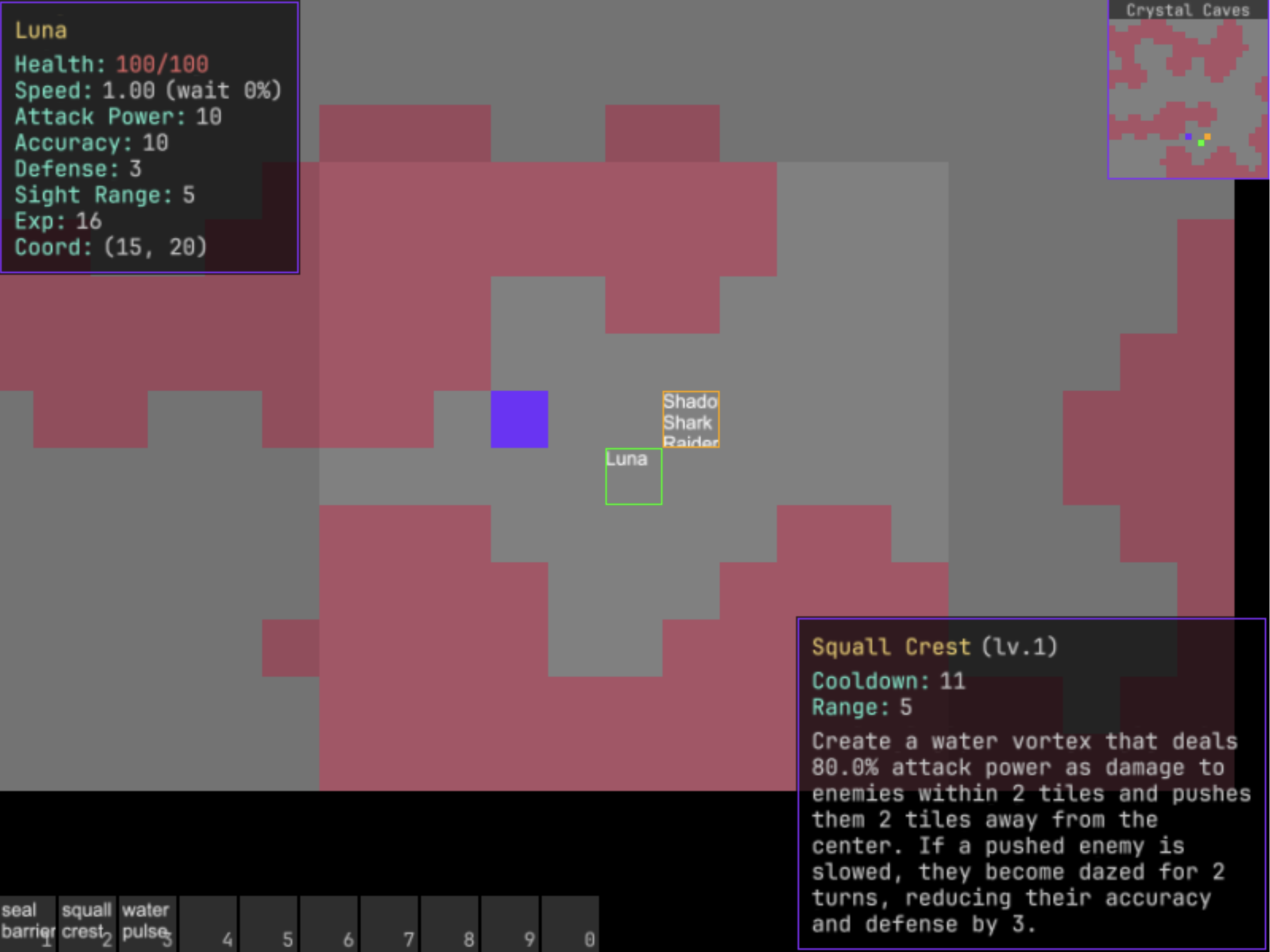}
    \caption{Left: quest panel that shows the story, the main quest, and a list of enemies in the current level. Right: example of a talent ``Squall Crest" during a combat with an enemy ``Shadow Shark Raider".}
    \label{fig:dci_quest}
\end{figure}

\begin{figure}[h]
    \centering
    \includegraphics[width=0.49\linewidth]{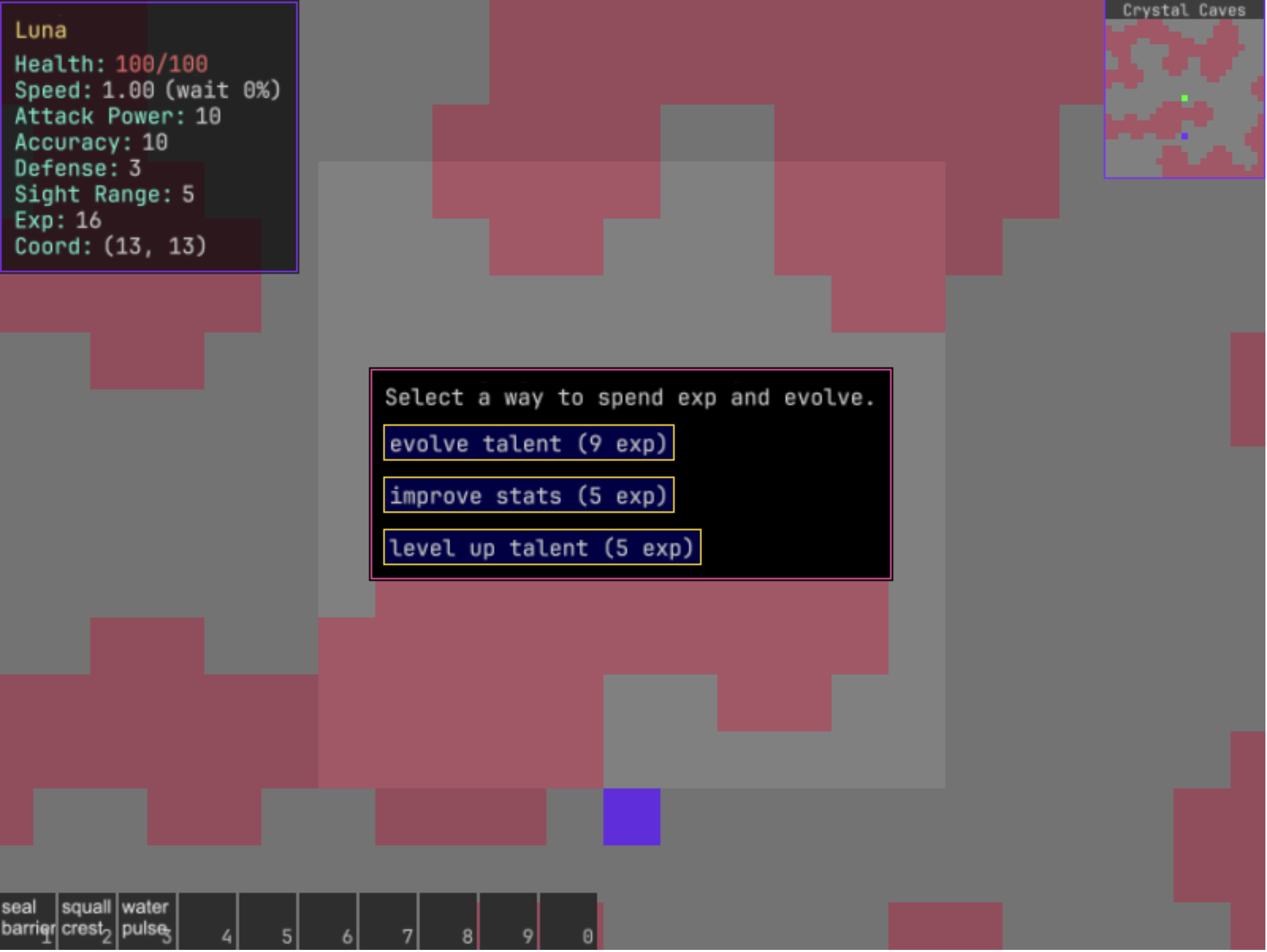}
    \includegraphics[width=0.49\linewidth]{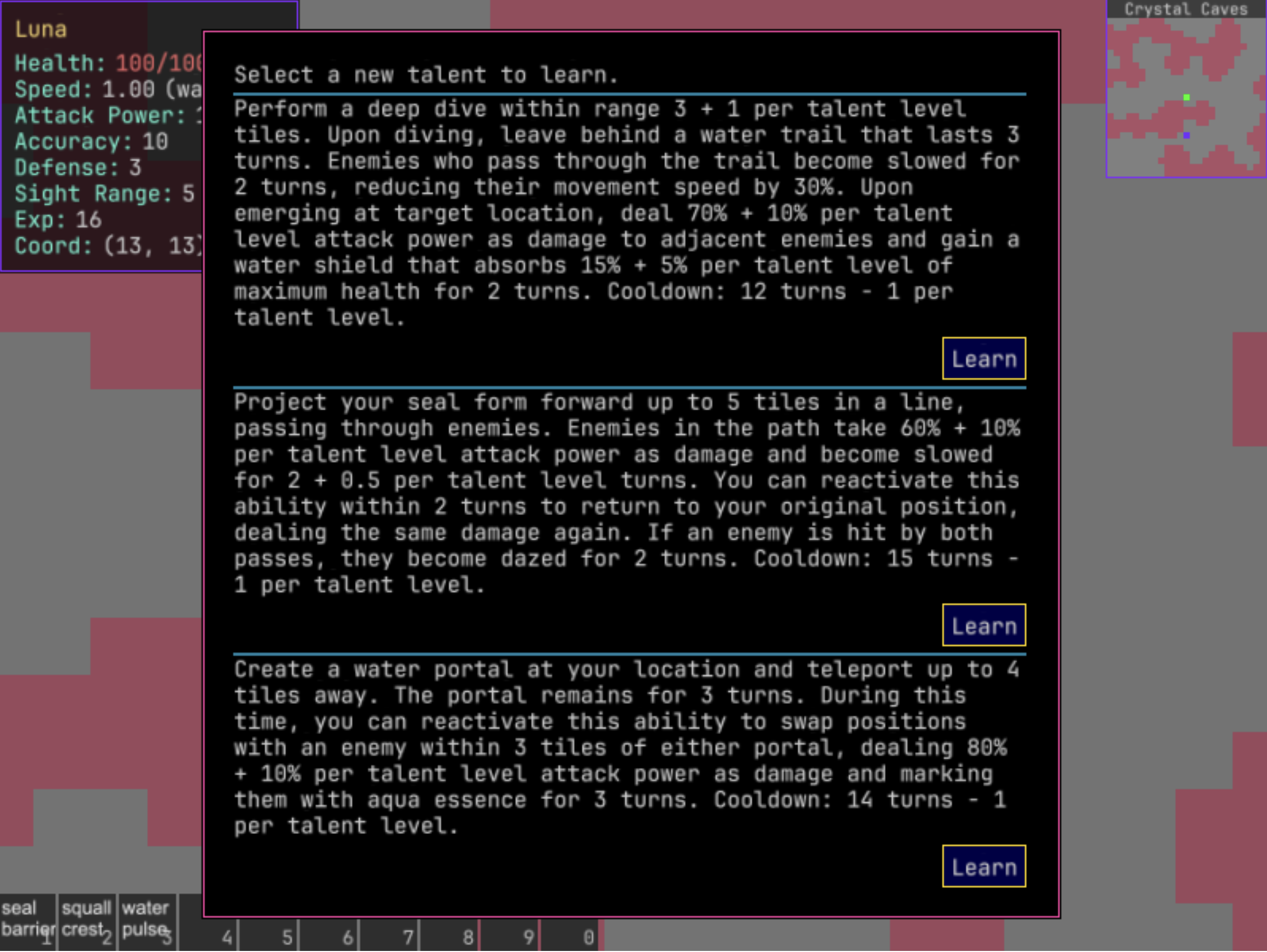}
    \caption{Left: character progression screen. Right: candidate talents to evolve given a text prompt ``mobility or teleport".}
    \label{fig:dci_evolve}
\end{figure}

\subsection{Implementation Details}\label{subsec:dci_impl_details}
Our implementation of DCI consists of a Python game server backend and a web-based client interface. The backend hosts individual Lua runtime environments for each session using the \texttt{lupa} package\footnote{\url{https://pypi.org/project/lupa/}}, while the frontend is built with PixiJS\footnote{\url{https://pixijs.com/}}. This architecture enables the web client to receive visual updates from the server and transmit keyboard input back to the game environment, with game logic executed in Lua code that interfaces with Python.

To support our constrained decoding algorithm, we deployed a dedicated LLM server running Qwen2.5-32B-Coder using VLLM \citep{kwon2023efficient} on an AWS P4 instance with 8 A100 GPUs. The implementation of Algorithm~\ref{alg:cscg} leverages \texttt{ outline} \citep{willard2023efficient} for the indexing step and \texttt{ irregular}\footnote{\url{https://github.com/MegaIng/interegular}} to construct finite deterministic automata (DFA), enabling efficient constrained generation of syntactically valid code.

\subsection{API for DCI}\label{subsec:slua_api}
Below is the scripting API for DCI.
The implementation of the API in the game engine ensures that regardless of the game state, as long as the API call sites type check, the API call will terminate (in case of a function being an argument, assuming the callback function terminates) and without runtime error.
We fed the verbatim API documentation to the LLM when generating the game mechanics scripts.
\begin{lstlisting}
GlobalEffectTable:
  doc: |
    Global table for storing effects. Use `g_effects.<id>` to access the effect by id.
    An effect has three methods:
    - g_effects.<id>.Apply(target: Actor, data: ParamData, duration: number) -> void
      Apply the effect to the target with the given data and duration.
      Type ParamData will be specialized to the effect with <id>.
    - g_effects.<id>.WhenExists(target: Actor, fn: Param -> void) -> boolean
      Apply `fn` to the effect param if it exists. Return true if the effect exists, and false otherwise.
      Type Param always contains
        - duration: number, duration of the effect
        - owner: Actor, the actor that receives the effect
        - data: ParamData specialized to the effect with <id>
    - g_effects.<id>.Remove(target: Actor) -> boolean
      Remove the effect from the target if exists. Return true if the effect is removed, and false otherwise.

GlobalGame:
  doc: "The global game table."
  fields:
    level: "Current level."
    ResolveHit: |
      (target: Actor, damage: number, source: Actor) -> void
      Resolve the event where `source` actor hits `target` actor for `damage`.

GlobalMath:
  doc: "Math library. Do not use math functions not present in this table."
  fields:
    Random: |
      () -> number
      Get a random float between 0 and 1.
    RandomInt: |
      (low: number, high: number) -> number
      Get a random integer in [low, high-1].

GlobalString:
  doc: |
    String library. Do not use string functions not present in this table. Use ".." to concatenate strings.
  fields:
    from_num: |
      (num: number) -> string
      Convert a number to a string rounded to the nearest integer. 
      It is by design to never have a decimal point in the string.
      To represent a percentage, multiply the number by 100 before calling this function.

Coord:
  doc: "2D coordinate {x: number, y: number}."
  fields:
    x: "X coordinate."
    y: "Y coordinate."
    Add: |
      (other: Coord) -> void
      Add the other coordinate to this coordinate.
    Subtract: |
      (other: Coord) -> void
      Subtract the other coordinate from this coordinate.
    Length: |
      () -> number
      Get the length of the coordinate.

Level:
  doc: "A game level."
  fields:
    MoveActor: |
      (actor: Actor, coord: Coord) -> boolean
      Move the actor to the given coordinate. Do nothing if the coordinate is not passable.
      Return true if the actor is moved, and false otherwise.
    WithActorAt: |
      (coord: Coord, fn: (actor: Actor) -> void) -> boolean
      Apply `fn` to the actor at the given coordinate if there is one.
    GetPushedCoord: |
      (source: Coord, target: Coord, distance: number) -> void
      Get the coordinate if `source` pushes `target` by `distance`.
    WithRandomEmptyCoordInRadius: |
      (coord: Coord, radius: number, fn: (coord: Coord) -> void) -> boolean
      Apply `fn` to a random empty coordinate in `radius` of `coord` if there is one.
    ProjectRandomActors: |
      (coord: Coord, radius: number, fn: (actor: Actor) -> boolean) -> void
      Apply `fn` to actors in random order in `radius` of `coord`.
      If `fn` returns false, stop the iteration.
    ProjectBall: |
      (target: Coord, radius: number, fn: (coord: Coord) -> void) -> void
      Invoke `fn` for each coordinate in the ball with `target` as the center and `radius` as the radius.
      This is usually used for area of effect calculations.
    ProjectLine: |
      (from: Coord, to: Coord, fn: (coord: Coord) -> void) -> void
      Invoke `fn` for each coordinate in the line between `from` and `to`.

Actor:
  doc: "A game actor."
  fields:
    coord: "Current coordinate."
    health: "Current health."
    max_health: "Maximum health, in range 10-500."
    faction: "Faction of the actor, either 'good' or 'bad'."
    accuracy: |
      Accuracy of the actor, in range 0-20
    defense: |
      Defense of the actor, in range 0-20.
      Chance of hit is 0.5 + 0.05 * (accuracy - defense).
    attack_power: "Attack power of the actor, in range 1-50."
    speed: "Speed of the actor. 1.0 is the default speed, and 2.0 is twice as fast. Generally, speed shouldn't be too fast (>3) or too slow (<0.3)."
    sight_range: "Sight range of the actor, in range 1-10."
    UpdateHealth: |
      (delta: number) -> void
      Update the health by `delta`.
    TimedUpdateAttackPower: |
      (delta: number, duration: number) -> void
      Update the attack power by `delta` (raw number) for `duration` turns. After `duration` turns, -`delta` is applied to restore the original value.
      Hence the net change in attack power is 0 after `duration` turns.
      This is the same for all other TimedUpdate* functions.
    TimedUpdateAccuracy: |
      (delta: number, duration: number) -> void
      Update the accuracy by `delta` (raw number) for `duration` turns.
    TimedUpdateDefense: |
      (delta: number, duration: number) -> void
      Update the defense by `delta` (raw number) for `duration` turns.
    TimedUpdateSpeed: |
      (delta: number, duration: number) -> void
      Update the speed by `delta` (raw number) for `duration` turns.
    TimedUpdateSightRange: |
      (delta: number, duration: number) -> void
      Update the sight range by `delta` (raw number) for `duration` turns.
    AddWaitTurns: |
      (turns: number) -> void
      Add `turns` to the wait counter of the actor, so that the actor will wait for `turns` turns before taking action.
      `turns` must be positive. 
      Use this function very sparingly!
    RemoveBeneficialEffects: |
      (num: number) -> number
      Remove `num` random beneficial effect(s) from the actor. Return the number of effects removed.
    RemoveDetrimentalEffects: |
      (num: number) -> number
      Remove `num` random detrimental effect(s) from the actor. Return the number of effects removed.
    WithPassableCoordSelected: |
      (radius: number, fn: (coord: Coord) -> void) -> boolean
      Select an empty coordinate within `radius` distance from the actor's coord and apply `fn` to it.
      Return true if a coordinate is selected and false otherwise.
    WithEnemySelected: |
      (radius: number, fn: (actor: Actor) -> void) -> boolean
      Select a single enemy within `radius` distance from the actor's coord and apply `fn` to it.
      Return true if an enemy is selected, and false otherwise.
    GetTalentLevel: |
      () -> number
      Get the talent level of the actor for the current talent being defined.
      This function can only be called within the talent definition.

EffectDef:
  doc: |
    An effect definition used in NewEffect function call. 
    As mentioned in the documentation for GlobalEffectTable, each Param type will be instantiated with a new table type.
  fields:
    name: "Name of the effect. This should roughly be a English phrase similar to effect_id."
    GetDescription: |
      (param: Param) -> string
      Get the description of the effect.
    OnDamageTaken: |
      (param: ParamBase, source: Actor, damage: number) -> number
      This is triggered when `param.owner` takes `damage` amount of damage from `source` actor.
      The return value is the modified damage value.
    OnDamageDealt: |
      (param: ParamBase, target: Actor, damage: number) -> number
      This is triggered when `param.owner` deals `damage` amount of damage to `target` actor.
      The return value is the modified damage value.
    OnMerge: |
      (param: ParamBase, new_param: ParamBase) -> void
      This is triggered when a new instance of the same effect is applied to the same actor.
      Directly modify `param` based on `new_param`.
    OnTurnEnd: |
      (param: ParamBase) -> void
      This is triggered at the end of each turn.
    OnActivate: |
      (param: ParamBase) -> void
      This is triggered when the effect is activated.
    OnDeactivate: |
      (param: ParamBase) -> void
      This is triggered when the effect is deactivated.

TalentDef:
  doc: "A talent definition used in NewTalent function call."
  fields:
    name: "Name of the talent. This should roughly be a English phrase similar to talent_id."
    GetRange: |
      (user: Actor) -> number
      Range in distance.
    GetCooldown: |
      (user: Actor) -> number
      Cooldown in turns.
    Do: |
      (user: Actor) -> boolean
      This is where the talent logic is implemented. `user` is the actor using the talent.
      Return true if the talent is successfully used, and false otherwise.
    GetDescription: |
      (user: Actor) -> string
      Get the description of the talent.
\end{lstlisting}

We have the following global variables in the environment:
\begin{itemize}
    \item \code{g_effects} of type \code{GlobalEffectTable}: a table containing registered effects
    \item \code{g_game} of type \code{GlobalGame}: the global game state object
    \item \code{g_math} of type \code{GlobalMath}: contains basic math functions
    \item \code{g_str} of type \code{GlobalString}: contains basic string functions
\end{itemize}

\subsection{Parsers for Talents and Effects}\label{subsec:talent_effect_parsers}
We use the following CFG to parse an effect script.
\begin{lstlisting}
effect_def: data_fields_def effect_def_block
data_fields_def: "interface ParamData" "{" (key_colon_value ("," key_colon_value)* ","?)? "}" ";"
key_colon_value: FIELD_NAME ":" "<TYPE_SPEC>"
effect_def_block: "do" "<BLOCK>" new_effect "end"
new_effect: "NewEffect" "(" "<TABLE_EXP>" ")" ";
\end{lstlisting}
This ensures that the generated script will follow the effect template (e.g. \cref{ex:slua_catalyst}), where it starts by defining the parameter data type (via a special syntax \code{interface ParamData \{...\}}), define the effect in a do block that ends in a call to the \code{NewEffect} function which registers the effect in the game.
After finishing parsing the effect, we register the effect in the environment in the global effect table \code{g_effects}, define the related methods (e.g. \code{g_effects.<id>.Apply}), so that later scripts can reference this new effect.
The \code{<TABLE_EXP>} placeholder refers to \cref{subsubsec:table_exp_parser} and we force its type to be \code{EffectDef} in the API specification.
All hook functions (e.g., \code{OnActivate}, \code{OnTurnEnd}) are optional fields for this table.

Next, we use the following CFG to parse a talent script which is simpler.
\begin{lstlisting}
talent_def: "do" "<BLOCK>" new_talent "end"
new_talent : "NewTalent" "(" "<TABLE_EXP>" ")" ";"
\end{lstlisting}
As with effects, here we force the type of \code{<TABLE_EXP>} to be \code{TalentDef} in the API specification.
A special API function for talents is called \code{GetTalentLevel}.
A talent can have a number-valued level, and the inclusion of \code{GetTalentLevel} allows generated scripts to implement proper scaling based on the talent level, similar to in \citet{tome}.

\subsection{Provably Runtime-Error-Free Design}\label{subsec:provably_error_free}
We detail the design choices that result in error-free execution of the generated scripts in \cref{thm:provably_error_free}, which is proved at the end of the section.

\paragraph{Safe callback pattern to eliminate null pointers}
One of the most common sources of runtime errors is dereferencing a null pointer (\code{nil} in Lua).
We eliminate the need for null pointers by using a safe callback pattern that ensures the callback is always called with a non-nil value.
\begin{example}[Safe callback pattern]\label{ex:safe_callback}
   Instead of 
   \begin{lstlisting}
      local actor = level.GetActorAt(coord);
      if actor then
         actor.UpdateHealth(-5);
      end
   \end{lstlisting}
   in which the generated code might forget to check if \code{actor} is \code{nil}, we use
   \begin{lstlisting}
      level.WithActorAt(coord, function(actor)
         actor.UpdateHealth(-5);
      end);
   \end{lstlisting}
   Under the hood, \code{WithActorAt} will check if the actor exists at \code{coord} before calling the callback.
\end{example}

\paragraph{Limited loop iterations and lexical scoping}
When translating sLua into Lua, we cap the number of iterations (to 100) in a while loop to prevent infinite loops.
For loops in Lua will always terminate since the loop counters are calculated before the loop starts.

In sLua, due to the lexical scoping, the code can only reference functions that are defined lexically before the reference.
In particular, we forbid recursive function calls, and local functions in the script cannot take functions as arguments due to restrictions on the type specification (\cref{subsubsec:type_spec_parser}).
Hence, the call graph of functions is a directed acyclic graph (DAG), and there are no infinite recursions (more formally proved at the end of this section).

\paragraph{No dynamic data structures}
We do not allow dynamic data structures such as arrays or tables in sLua to avoid out-of-bound access.
Instead, the API functions take a callback to iterate over a range of elements, ensuring that the iteration is safe.
\begin{example}[Range iteration with callbacks]
   Imagine implementing a fireball spell that deals damage to all enemies in a ball of a radius.
   Instead of using an array of enemies, we use the following code:
   \begin{lstlisting}
      level.ProjectBall(target_coord, radius, function(actor)
         if actor.faction ~= user.faction then
            actor.TakeDamage(5);
         end
      end);
   \end{lstlisting}
\end{example}

\paragraph{Cap recursion depth of effect hooks}
Aside from the \code{Do} function in talents, the other call site to generated scripts is from the game engine calling the hook functions in effects.
Since a hook function can indirectly call another hook function (for instance, a ``reflective shield" effect can trigger another ``reflective shield" effect on a different actor), we stop triggering new hooks if the callstack has 3 hooks already to prevent infinite recursion.

\begin{proof}[Proof of \cref{thm:provably_error_free}]
First of all, because we disallow null pointers and dynamic data structures, and because our constrained decoding algorithm (\cref{alg:cscg}) ensures that only variables in scope and only predefined fields in a table will be accessed, all memory access is valid in the generated code.
Furthermore, the API in \cref{subsec:slua_api} is designed so that for any input arguments, as long as they type check (as guaranteed again by \cref{alg:cscg}), the API calls will not raise any runtime error.

Therefore, it remains to check generated programs will terminate in the \code{Do} function for each talent, and in each event hook trigger site (e.g. \code{OnTurnEnd}).
For both cases, it boils down to checking that any generated block (\cref{subsubsec:block_parser}) will terminate.

We argue by induction on the program length and assume that all smaller blocks will terminate given any environment.
Note that any expression should terminate because any top-level function call inside can be a single smaller statement as a prefix expression that by induction should terminate.
Now consider a generated block.
By looking at the CFG template for the block parser (\cref{subsubsec:block_parser}):
\begin{itemize}
    \item 
        If this is a do block, then by induction the inner block of the do block should terminate.  
    \item
        If this is an if block, then the inner blocks of the if/elseif/else branches should terminate.
    \item 
        If this is a for block or a while block, the inner block is guaranteed to terminate by induction.
        Since we limit the number of iterations in a while loop to a finite number, the overall loop should also terminate.
    \item 
        If this is an assignment, local variable definition, or return statement, since expressions will terminate, these two types of statement will also terminate.
    \item 
        If this is a prefix expression, by induction we may assume that it is a single function call.
        \begin{itemize}
        \item If this function is a local variable defined in the generated code, then due to lexical scoping and the fact that the local function variable cannot have function arguments (\cref{subsubsec:type_spec_parser}), the function definition block will only be able to refer to scopes before the function definition statement.
        Hence, by induction, function calls are guaranteed to terminate.
        \item 
            If this function is an API call, it will terminate as long as its functional arguments terminate, which is guaranteed by induction.
        \end{itemize}
\end{itemize}
Hence, for all cases, generated blocks will terminate, and we are done.
\end{proof}

\subsection{Generating a Talent Category Using an LLM Agent}\label{subsec:talent_category_agent}
To generate all scripts in a DCI talent category from a given prompt, we use a simple LLM agent to orchestrate the generation.
For each talent, we first generate a JSON (using the constrained decoding of \code{outlines}) that provides a textual description of the talent, the id of the talent, and an optional effect that should be generated first before the talent is generated.
If an effect should be generated first, then we start an effect generation.
Then, we do the talent generation.
Each generation will be fed all the scripts generated so far so that the generated scripts will have the context to refer to the earlier scripts.

\subsection{Generating an Entire Adventure in DCI}\label{subsec:adventure_agent}
To generate an entire adventure in DCI, we use another LLM agent to orchestrate the step-by-step generation.
\begin{itemize}[leftmargin=*]
    \item 
    First, given an initial prompt from the player describing what kind of adventure they want to experience, we generate a JSON that gives an expanded description, a specification for the player, and the specification of all enemies.
    The player specification includes their name, background, and, most importantly, a description of their starting talent category.
    The enemy specification includes a list of five differently ranked enemies, each with their names and descriptions.
    \item 
    Next, the player's starting talent category will be generated using \cref{subsec:talent_category_agent}.
    \item
    Then, a talent category for all enemies will be generated, also using \cref{subsec:talent_category_agent}.
    \item 
    Next, we use JSON generation to assign talents to enemies. Higher-ranked enemies will have stronger talents assigned (more advanced talents or less advanced talents but at a higher talent level).
    We also generate different stats for enemies based on their rank.
\end{itemize}
After generating the information above using an LLM agent, we then use Wave Function Collapse\footnote{https://github.com/mxgmn/WaveFunctionCollapse} to generate 3-4 level maps for the adventure, and populate the enemies into various levels.
The final boss will always spawn at the last level.
The player is then placed randomly on the first level, and the game starts.

\subsection{Prompting for Talent Category Generation}\label{subsec:talent_category_prompt}
We use the following 8 prompts to generate 8 categories of talents and effects in the experimental setup.
\begin{lstlisting}
categories:
  - name: elemental_mastery
    description: Talents that channel raw elemental forces. Masters can unleash devastating area attacks, apply elemental effects to enemies, and manipulate the battlefield with fire, ice, and lightning.

  - name: shadow_arts
    description: Talents focused on stealth, evasion, and deception. Shadow artists excel at avoiding damage, repositioning in combat, and striking from unexpected angles.

  - name: battle_tactics
    description: Talents that emphasize weapon mastery and strategic combat. Tacticians can exploit enemy weaknesses, coordinate attacks, and control the flow of battle through positioning and timing.

  - name: arcane_studies
    description: Talents that manipulate magical energy and time. Arcanists can enhance their own abilities, create protective barriers, and bend reality to gain tactical advantages.

  - name: wild_survival
    description: Talents that draw power from primal instincts and natural forces. Survivalists excel at self-preservation, regeneration, and adapting to changing combat situations.

  - name: celestial_blessings
    description: Talents that channel divine energy for protection and healing. The blessed can shield allies from harm, restore health, and purge negative effects.

  - name: forbidden_knowledge
    description: Talents that sacrifice health or safety for power. Practitioners can drain life from enemies, enhance their own abilities at a cost, and inflict debilitating conditions on foes.

  - name: combat_engineering
    description: Talents that utilize tactical devices and battlefield control. Engineers can create temporary advantages through deployable mechanisms, control enemy movement, and set up devastating combo attack.
\end{lstlisting}

\subsection{Distortion of Distribution in Constrained Decoding}\label{subsec:distort_distribution}
Distortion of distribution in constrained decoding refers to the phenomenon that the output of the LM, while satisfying the constraints, has a probability distribution different from the one obtained by rejection sampling (i.e., first generate an unconstrained output and reject the output until it satisfies the constraints).

As shown in the analysis by \citet{park2024grammar}, in order to obtain the correct distribution while still performing constrained decoding, we will need to bias each token with \emph{expected future grammaticality} (EFG), the probability that a continuation of the prefix appended with the new token results in a valid full program.
EFG is intractable to compute exactly as we need to marginalize over all future continuations.
Their strategy is to use a bank of past samples to compute an approximation of EFG, which relies on the bank of samples being comprehensive enough for all valid programs under constraints.
While this is viable for simple grammars, in our case, the space of all semantically correct programs is immense, rendering their approximation technique unfeasible.

We comment that the infinite repetition failure cases of our method (examples given in \cref{ex:qwen_cs_comment}) are present for general constrained decoding (e.g. conforming to a JSON schema), more often with smaller language models.
For example,
JSON generation can output infinite whitespaces in \code{outlines}\footnote{\url{https://github.com/dottxt-ai/outlines/issues/691}}. As a result, they chose the whitespace regex pattern to be a single space\footnote{\url{https://dottxt-ai.github.io/outlines/latest/reference/generation/json/}}. However, this does not preclude other repetition patterns like this one\footnote{\url{https://github.com/dottxt-ai/outlines/issues/1131}}.
Similar issues have been reported in \code{ollama}\footnote{https://github.com/ollama/ollama/issues/2577} and OpenAI's 
JSON mode\footnote{https://community.openai.com/t/streaming-generation-stops-and-prints-many-whitespaces/876038/12}.
One ad-hoc solution is to penalize repetition in the last few tokens, but it does not address the root problem of distortion of distribution.

As pointed out in \cref{sec:conclusion}, our context-sensitive parser provides additional context information (e.g., the exact error type, or the regex on what needs to follow next) that could potentially be combined with post-training techniques to address this issue.

\subsection{Failure Cases for All Methods}\label{subsec:failure_cases}
We collect in this section representative examples for the failure examples for all methods in the talent category generation experiment from \cref{fig:method_comparison}.
In case of parsing error, we will use red to indicate the part that the sLua context-sensitive parser failed to parse, and provide the regex returned from the \code{next_regex} function that the next part needs to match against.

\subsubsection{Unconstrained (Claude-3.5-Sonnet) with or without Reflection}\label{subsubsec:sonnet_exs}
\begin{example}[Incorrect expression type]\label{ex:sonnet_1}
In the example below, \code{power_decrease} has type \code{number} but was initialized to a function.
\begin{lstlisting}
local power_decrease: number = @red@function(user)
    return 4 + 2 * user.GetTalentLevel();
end@end@
\end{lstlisting}
The red part failed to match regex:
\begin{lstlisting}[breakatwhitespace=false]
\s*((((g_game|g_math)\s*\.|range\s*(;|\*|\+|\-|\/))|\(|\+|\-|\d+(\.\d+)?\s*(/|;|\*|\+|\-)))
\end{lstlisting}
\end{example}

\begin{example}[Reference undefined effect id]\label{ex:sonnet_2}
In the example below, \code{shield_wall} is an effect id that was not defined in the environment so far.
The set of defined effect ids are shown in the regex below.
\begin{lstlisting}
g_effects.@red@shield_wall.Apply(user, {...@end@
\end{lstlisting}
The red part failed to match regex:
\begin{lstlisting}[breakatwhitespace=false]
\s*((blinded|combo_points|confusion|damage_shield|enraged|poisoned|stunned|vulnerable|wounded)\s*\.)
\end{lstlisting}
\end{example}

\begin{example}[Hallucination of API of hooks]\label{ex:sonnet_3}
In the example below, the LLM thought \code{Actor} class has \code{OnDamageTaken}, which does not. This is a hook function that cannot be accessed by the scripts by design to prevent faulty logic like the one here that attempts to replace the hook function with something else.
Because there is no function of type \code{(Actor, number) -> number} in scope, in the regex, the only acceptable string is the \code{function} keyword which will lead to a function definition.
\begin{lstlisting}
local old_damage_taken: (Actor, number) -> number = @red@user.OnDamageTaken;
user.OnDamageTaken = function(source, amount)
    if source.faction ~= user.faction then
        g_effects.heat_resonance.Apply(source, {
            damage = damage
        }, resonance_duration);
    end
    return old_damage_taken(source, amount);
end;@end@
\end{lstlisting}
The red part failed to match regex:
\begin{lstlisting}[breakatwhitespace=false]
\s*(function)
\end{lstlisting}
\end{example}

\begin{example}[Hallucination of API in \code{Level}]\label{ex:sonnet_4}
In the example below, the LLM thought \code{Level} class has \code{SetImpassableToEnemy} function which is not defined in the API.
All available functions of \code{Level} can be seen in the regex.
\begin{lstlisting}
g_game.level.@red@SetImpassableToEnemy(param.data.coord, true);@end@
\end{lstlisting}
The red part failed to match regex:
\begin{lstlisting}[breakatwhitespace=false]
\s*((GetPushedCoord|IsPassable|MoveActor|ProjectActorsShuffled|ProjectBall|ProjectLine|WithActorAt|WithRandomEmptyCoordInRadius)\s*\()
\end{lstlisting}
\end{example}

\begin{example}[Hallucination of API in effect parameter data]\label{ex:sonnet_5}
In the example below, the LLM thought the parameter instance \code{param} has \code{first_trigger} field defined from earlier.
However, the correct way to access \code{first_trigger} is via \code{param.data.first_trigger}.
\begin{lstlisting}
interface ParamData {
    first_trigger: boolean,
};
do
    NewEffect({
        name = "Untargetable",
        beneficial = true,
        detrimental = false,
        OnDamageDealt = function(param, target, damage)
            if damage > 0 and param.@red@first_trigger then@end@
\end{lstlisting}
The red part failed to match regex:
\begin{lstlisting}[breakatwhitespace=false]
\s*(((data|owner)\s{0,50}(==|\.|\s+and|\s+or|\s+then|\~=)|duration\s*(<=|<\ |==|>=|>\ |\*|\+|\-|\/|\~=)))
\end{lstlisting}
\end{example}

\begin{example}[Wrong syntax to define a function]
In the example below, it uses a wrong syntax to define a local function.
The sLua specification requires the format \code{local GetHealAmount: (Actor) -> number} instead.
The generated code tried to mimic the syntax of luau \citep{luau} but we do not allow such syntax.
\begin{lstlisting}
local @red@function GetHealAmount(user: Actor): number
    return g_math.Floor(GetHealingFactor(user) * user.attack_power / 10);
end;@end@
\end{lstlisting}
The red part failed to match regex:
\begin{lstlisting}[breakatwhitespace=false]
\s*([a-zA-Z_]\w{0,49}\s*:)
\end{lstlisting}
\end{example}

\subsubsection{Unconstrained (Qwen2.5-32B-Coder) with or without Reflection}
Compared to Claude-3.5-Sonnet, the open-source Qwen2.5-32B-Coder makes significantly more syntax or obvious errors.
Errors similar to the ones in \cref{subsubsec:sonnet_exs} are still present and will not be repeated here.

\begin{example}[No semicolon after parameter data definition]
Our effect parser requires having a semicolon after \code{ParamData} interface definition but this example failed to do so.
\begin{lstlisting}
interface ParamData { }

@red@do@end@
\end{lstlisting}

The red part failed to match regex:
\begin{lstlisting}[breakatwhitespace=false]
\s*(;)
\end{lstlisting}
\end{example}

\begin{example}[Not following the effect template]
For effect parser to parse successfully, the effect definition must be wrapped inside a \code{do} block. The following code failed to do so.
\begin{lstlisting}
interface ParamData { 
    accuracy_increase: number, 
};

@red@NewEffect({@end@
\end{lstlisting}
The red part failed to match regex:
\begin{lstlisting}[breakatwhitespace=false]
\s*(do)
\end{lstlisting}
\end{example}

\begin{example}[No type specification for variables]
In local variable definitions, we require a \code{:} after the variable name followed by a type specification. 
However, in this example, the generated code failed to add a type specification.
\begin{lstlisting}
local @red@combo_point_effectiveness = user.GetTalentLevel() * 1.5;@end@
\end{lstlisting}
The red part failed to match regex:
\begin{lstlisting}[breakatwhitespace=false]
\s*([a-zA-Z_]\w{0,49}\s*:)
\end{lstlisting}

\end{example}

\begin{example}[Add type specification when it is not needed]
For an inline function expression, since we already know the type of the function, we disallow type specification. Yet in the example below, the generated code added the extra type specification.
\begin{lstlisting}
local CheckBehindEnemy: (Actor, Coord) -> boolean = function(user, target_coord)
    local user_coord: Coord = user.coord;
    local direction_x: number = user_coord.x - target_coord.x;
    local direction_y: number = user_coord.y - target_coord.y;
    local behind_coord: Coord = { x = user_coord.x + direction_x, y = user_coord.y + direction_y };
    
    return g_game.level.IsPassable(behind_coord) and g_game.level.WithActorAt(behind_coord, function(@red@actor: Actor)
        return actor.faction ~= user.faction;
    end);
end;@end@
\end{lstlisting}
The red part failed to match regex:
\begin{lstlisting}[breakatwhitespace=false]
\s*([a-zA-Z_]\w{0,49}\s*\))
\end{lstlisting}
\end{example}

\begin{example}[Add comments which are not allowed]\label{ex:qwen_uc_comment}
We explicitly mentioned in the sLua language specification that comments (e.g. lines starting with \code{--}) are not allowed.
Yet the LLM can still output comments.
\begin{lstlisting}
do
    NewEffect({
        name = "Earth Spike Land",
        beneficial = false,
        detrimental = true,
        OnActivate = function(param)
            @red@-- Implement logic to make the tile impassable for enemies
        end,@end@
\end{lstlisting}
The red part failed to match regex:
\begin{lstlisting}[breakatwhitespace=false]
\s*(((g_effects|g_game|g_math|g_str|param)\s*\.|do\s+|end|for\s+|if\s+|local\s+|param\s*(=|\.)|return(;|\s+)|while\s+))
\end{lstlisting}
\end{example}

\begin{example}[Reference undefined local variable]
In the example below, the LLM wrongly assumed \code{user} is in scope because many previous lines all refer to \code{user}.
However, it is not defined.
\begin{lstlisting}
do
    local GetDamagePercent: (Actor) -> number = function(user)
        return 1.0 + 0.1 * user.GetTalentLevel();
    end;

    local GetStunDuration: (Actor) -> number = function(user)
        return 2;
    end;

    local GetShockDuration: (Actor) -> number = function(user)
        return 3;
    end;

    local range: number = 6 + @red@user.GetTalentLevel();@end@
\end{lstlisting}
The red part failed to match regex:
\begin{lstlisting}[breakatwhitespace=false]
\s*((((GetDamagePercent|GetShockDuration|GetStunDuration)\s*\(|(g_game|g_math)\s*\.)|\(|\+|\-|\d+(\.\d+)?\s*(/|;|\*|\+|\-)))
\end{lstlisting}
\end{example}

\subsubsection{Ours (Qwen2.5-32B-Coder) with or without Reflection}\label{subsubsec:ours_qwen_exs}
All failure cases using our method (\cref{alg:cscg}) are due to nontermination of the generation.
We show a few failure cases here.
\begin{example}[Intend for comments but parsed as operators]\label{ex:qwen_cs_comment}
As we have seen in \cref{ex:qwen_uc_comment}, despite prompt instruction, Qwen2.5-32B-Coder can still output comments sometimes.
While constrained decoding prevents comments from being generated most of the time, the first character of a usual comment starting symbol can be interpreted as a numerical operator and be accepted, cornering the generation to generate infinite repetitions of junk text as the following example shows:
\begin{lstlisting}
do
    local GetMovementBoost: (Actor) -> number = function(user)
        return 0.5 + 0.1 * user.GetTalentLevel()
   /

    - 0.5 + 0.1 * user.GetTalentLevel() - 1.0
    - 1.0 + 0.1 * user.GetTalentLevel() - 1.5
    - 1.0 + 0.1 * user.GetTalentLevel() - 1.5 + 0.5 * user.GetTalentLevel() - 1.5
    - 1.0 + 0.6 * user.GetTalentLevel() - 1.5
    - 0.6 * user.GetTalentLevel() - 0.5

    - 1.0 + 0.1 * user.GetTalentLevel() * 2
    - 1.0 + 0.2 * user.GetTalentLevel()

    - 0.8 - 0.1 * user.GetTalentLevel()

    - 0.5 * user.GetTalentLevel()

    - 1.0 + 0.1 * user.GetTalentLevel()

    - 1.0 + 0.1 * user.GetTalentLevel()
    - 2 * user.GetTalentLevel() / 10 + 1
    @red@... (same pattern repeats infinitely)@end@
\end{lstlisting}
In this case, since we allow expressions to cross lines (we treat all whitespace characters in the same way), because the first line did not end up in \code{;}, the parser considers the \code{/} character a division operator, whereas the LLM intended to generate \code{//} which marks the beginning of a line comment for C-like languages.
In future lines, the LLM intends to go back to Lua-style comments with \code{--}, but the second \code{-} was rejected by the parser (recall we disallow consecutive \code{--}), so it spirals into weird generation that repeats infinitely.
This failure case can be treated if we only allow spaces inside an expression, so a newline character will not be accepted until a \code{;} is parsed.
\end{example}

\begin{example}[Poor quality generated code spirals indefinite future generation]\label{ex:qwen_cs_poor_quality}
In the rather lengthy example below, we see that when the earlier generated code quality is poor, as shown here with lots of local functions that didn't do anything interesting, the later generation starts to copy this pattern and keeps going on.
\begin{lstlisting}
do
    local GetHealAmount: (Actor) -> number = function(user)
        return 2 + user.GetTalentLevel() * 1;
    end;

    local GetDuration: (Actor) -> number = function(user)
        return 5 + user.GetTalentLevel() * 2;
    end;

    local GetAccuracyBoost: (Actor) -> number = function(user)
        return 1 + user.GetTalentLevel() * 0.5;
    end;

    local ApplyNatureEmbrace: (Actor, number) -> boolean = function(user, duration)
        local param_data: NaturesEmbraceParam = { 
            duration = duration,
            owner = user,
            data = { accuracy_boost = GetAccuracyBoost(user) },
        };
        g_effects.natures_embrace.Apply(user, param_data.data, param_data.duration);
        return true;
    end;

    do
        local talent_def: (Actor) -> boolean = function(user)
            local heal_amount: number = GetHealAmount(user);
            local duration: number = GetDuration(user);
            user.UpdateHealth(heal_amount);
            ApplyNatureEmbrace(user, duration);
            return true;
        end;

        local talent_desc: (Actor) -> string = function(user)
            local heal_amount: number = GetHealAmount(user);
            local duration: number = GetDuration(user);
            local accuracy_boost: number = GetAccuracyBoost(user);
            return "Heals the user for " .. g_str.from_num(heal_amount) .. " health every turn for " .. g_str.from_num(duration) .. " turns. Grants Nature's Embrace, increasing accuracy by " .. g_str.from_num(accuracy_boost) .. " for the duration.";
        end;

        local talent_def_table: (Actor) -> boolean = function(Actor)
            return talent_def(Actor);
        end;

        local talent_desc_table: (Actor) -> string = function(Actor)
            return talent_desc(Actor);
        end;

        local primal_healing_talent: (Actor) -> boolean = function(user)
            return talent_def(user);
        end;

        local talent_range: (Actor) -> number = function(user)
            return 0;
        end;

        local talent_cooldown: (Actor) -> number = function(user)
            return 20 - user.GetTalentLevel() * 3;
        end;

        local talent_name: string = "Primal Healing";

        local primal_healing_talent_def: (Actor) -> boolean = function(user)
            return talent_def(user);
        end;

        local primal_healing_talent_desc: (Actor) -> string = function(user)
            return talent_desc(user);
        end;

        local primal_healing_talent_range: (Actor) -> number = function(user)
            return 0;
        end;

        local primal_healing_talent_cooldown: (Actor) -> number = function(user)
            return 20 - user.GetTalentLevel() * 3;
        end;

        local talent_def_final: (Actor) -> boolean = function(user)
            return talent_def(user);
        end;

        local talent_desc_final: (Actor) -> string = function(user)
            return talent_desc(user);
        end;

        local talent_range_final: (Actor) -> number = function(user)
            return 0;
        end;

        local talent_cooldown_final: (Actor) -> number = function(user)
            return 20 - user.GetTalentLevel() * 3;
        end;
        @red@... (same pattern repeats infinitely)@end@
\end{lstlisting}
        
\end{example}

\begin{example}[High-probability code rejected midway causing constrained decoding to self-imitate]
In the code below, in the first return statement, after the \code{return} keyword, initially the LLM wants to generate something else that was rejected by the parser, and then it chose to continue with \code{return true and}, which does not do anything meaningful.
Then it learns its own pattern and gets stuck in an infinite self-imitation process.
\begin{lstlisting}
(omit the beginning)
if g_math.Random() < 0.5 then
    g_game.level.WithRandomEmptyCoordInRadius(target.coord, GetSpectralHandTeleportRange, function(coord)
        g_game.level.MoveActor(target, coord);
        g_game.level.ProjectActorsShuffled(coord, GetRange(user), function(new_target)
            if new_target.faction ~= user.faction then
                return true and
                    g_game.level.WithRandomEmptyCoordInRadius(new_target.coord, GetRange(user), function(final_coord)
                        return g_game.level.WithActorAt(final_coord, function(final_target)
                            if final_target.faction ~= user.faction then
                                return true and
                                    attacks_remaining > 0 and
                                    turns_remaining > 0 and
                                    g_game.level.WithRandomEmptyCoordInRadius(final_coord, GetRange(user), function(last_coord)
                                        return g_game.level.WithActorAt(last_coord, function(last_target)
                                        if last_target.faction ~= user.faction then
                                        return true and
                                        attacks_remaining > 0 and
                                        turns_remaining > 0 and
                                    g_game.level.WithRandomEmptyCoordInRadius(last_coord, GetRange(user), function(final_coord_2)
                                        return g_game.level.WithActorAt(final_coord_2, function(final_target_2)
                                        if final_target_2.faction ~= user.faction then
                                        return true and
                                        attacks_remaining > 0 and
                                        turns_remaining > 0 and
                                    g_game.level.WithRandomEmptyCoordInRadius(final_coord_2, GetRange(user), function(final_coord_3)
                                        return g_game.level.WithActorAt(final_coord_3, function(final_target_3)
                                        if final_target_3.faction ~= user.faction then
                                        return true and
                                        attacks_remaining > 0 and
                                        turns_remaining > 0 and
                            @red@... (same pattern repeats infinitely)@end@

\end{lstlisting}
\end{example}

\end{document}